\newcommand{\R}{\mathbb{R}}
\newcommand{\E}{\mathbb{E}}
\newcommand{\N}{\mathbb{N}}
\newcommand{\Z}{\mathbb{Z}}
\renewcommand{\P}{\mathbb{P}}
\newcommand{\s}{\sigma}
\newtheorem{theorem}{Theorem}[section]
\newtheorem{remark}[theorem]{Remark}
\newtheorem{proposition}[theorem]{Proposition}
\newtheorem{lemma}[theorem]{Lemma}
\newtheorem{corollary}[theorem]{Corollary}
\newenvironment{proof}{{\bf Proof:}}{\hfill$\square$\vskip.5cm}
\newenvironment{proofof}{}{\hfill$\square$\vskip.5cm}
\def\be{\begin{equation}}
\def\ee{\end{equation}}
\def\bea{\begin{eqnarray}}
\def\eea{\end{eqnarray}}
\def\<{\langle}
\def\>{\rangle}
\title{Factorization properties in $d$-dimensional spin glasses.\\ Rigorous results and some perspectives.}
\author{Pierluigi Contucci\footnote{Universit\`a di Bologna, Piazza di Porta S.Donato 5, 40127 Bologna, Italy} \and  Emanuele Mingione \footnote{Universit\`a di Bologna, Piazza di Porta S.Donato 5, 40127 Bologna, Italy}  \and Shannon Starr\footnote{University of Alabama at Birmingham, 1300 University Boulevard, Birmingham, AL 35294-1170, USA}}
\date{\today}
\begin{document}

\maketitle

\begin{abstract}
\noindent
In this paper we show that $d$-dimensional Gaussian spin glass models
are strongly stochastically stable, fulfill the Ghirlanda-Guerra identities in distribution
and the ultrametricity property.
\end{abstract}

\vspace{8pt}
\noindent
{\small \bf Keywords:} Spin glass, stochastic stability, ultrametricity.
\vskip .2 cm
\noindent
{\small \bf MCS numbers:} Primary 60B10, 60G57, 82B20; Secondary 60K35.

\maketitle

\section{Introduction}
After the important result by Panchenko \cite{Panchenkoultra} on the proof of ultrametricity for a class
of mean field spin glasses the next natural question is to understand if its validity can also
be established for a $d$-dimensional lattice model more closely related to physical
systems. Among those the Edwards Anderson model defined by the Hamiltonian on $\Lambda\subset\Z^d$
\begin{equation}\label{hea}
H_{\Lambda} (\sigma) \; = \;
- \sum_{\substack{\boldsymbol{i},\boldsymbol{j} \in \Lambda\subset\Z^d\\ |\boldsymbol{i}-\boldsymbol{j}|=1}}
J_{\boldsymbol{i}\boldsymbol{j}}\sigma_{\boldsymbol{i}} \sigma_{\boldsymbol{j}}
\end{equation}
has certainly played a major role \cite{EA}. That model has escaped so far
a rigorous mathematical analysis. From the physical point of view the open question is
{\it do and (if yes) to what extent finite dimensional spin glasses behave according to
the mean field theory} \cite{MPV}. The question can be made more explicit in terms of
the overlap (Hamiltonian covariance) probability distribution:
\begin{enumerate}
\item the overlap distribution has a non trivial support.
\item the multi-overlap distribution factorizes into the single one according to the Parisi's
ultrametric scheme.
\end{enumerate}

The present paper deals with question $2)$ and answers positively to it. 

Let us first illustrate the statistical physics ideas that we are following.
The factorization laws that we deal with in this paper can be understood
as consequences of a simple stability method. Stability in Statistical Mechanics works by identifying
a small (yet non-trivial) deformation of the system, prove that in the
large volume limit the perturbation vanishes and, by means of the linear response theory,
compute the relations among observable quantities. This method leads to interesting
consequences and applications because it reduces the {\em a priori} degrees of freedom of a theory.
Following the ideas developed in \cite{cggep} one starts, in classical models, for smooth
bounded functions $f$ of spin configurations, from the counting measure
\begin{equation}
\mu_N(f)  =  \frac{1}{2^N}\sum_{\sigma}f(\sigma)  ,
\end{equation}
and defines the equilibrium state
\begin{equation}
\omega_{\beta,N}(f)  =  \frac{\mu_N(fe^{-\beta H_N})}{\mu_N(e^{-\beta H_N})}  .
\end{equation}
By considering the Hamiltonian per particle
\begin{equation}
h_N(\sigma)=\frac{H_N(\sigma)}{N}
\end{equation}
the classical perturbed state is defined by
\begin{equation}
\omega_{\beta,N}^{(\lambda)}(f)  =  \frac{\omega_{\beta,N}(fe^{-\lambda h_N})}{\omega_{\beta,N}(e^{-\lambda h_N})}  .
\end{equation}
Since the perturbation amounts to a small change in the temperature
\begin{equation}
\omega_{\beta,N}^{(\lambda)}(f)  =  \omega_{\beta +\frac{\lambda}{N},N}(f)
\end{equation}
one has that, apart from isolated singularity points, in the thermodynamic limit
\begin{equation}
\frac{d \omega_{\beta,N}^{(\lambda)}(f)}{d\lambda}  \rightarrow  0  .
\end{equation}
One may appreciate the content of the previous property by showing that it implies, for the Curie-Weiss ferromagnetic model in zero magnetic field, the relation
\begin{equation}
\omega_{\beta}(\sigma_1\sigma_2\sigma_3\sigma_4)  =  \omega_{\beta}(\sigma_1\sigma_2)^2  \; .
\end{equation}
Hence, although the magnetization itself may fail to concentrate due to a spin-flip symmetry
breaking, the square of the magnetization does concentrate in the thermodynamic limit.

The previous approach leads to the concept of {\it Stochastic Stability} when applied,
suitably adapted, to the spin glass phase. Consider, for smooth bounded function $f$ of $n$ spin
configurations, the quenched equilibrium state
\begin{equation}
< f >_{\beta,N}  =
\mathbb{E}\left(\frac{\sum_\sigma f(\sigma)e^{-\beta H_N}}{\sum_\sigma e^{-\beta H_N}}\right)\, .
\end{equation}
Define the deformation as:
\begin{equation}
< f >^{(\lambda)}_{\beta,N}  =  \frac{<fe^{-\lambda h_N}>}{<e^{-\lambda h_N}>}  .
\end{equation}
We observe that the previous deformation is, unlike in the classical case, not a simple temperature
shift. In fact:
\begin{equation}
< f >^{(\lambda)}_{\beta,N}  =
\frac{\mathbb{E}\left(\frac{\sum_\sigma f(\sigma)e^{-(\beta+\lambda/N)H_N}}{\sum_\sigma e^{-\beta H_N}}\right)}
{\mathbb{E}\left(\frac{\sum_\sigma e^{-(\beta+\lambda/N)H_N}}{\sum_\sigma e^{-\beta H_N}}\right)}\, ;
\end{equation}
nevertheless, the system is still stable with respect to it in a sense that will be made precise
in the following sections and is essentially captured by saying that apart from isolated singularity points,
in the thermodynamic limit
\begin{equation}
\frac{d}{d\lambda} < f >^{(\lambda)}_{\beta,N}  \to  0\, .
\end{equation}
Moreover the previous stability property implies (by use of the integration by parts techinque)
that the following set of identities (Ghirlanda-Guerra), first derived in \cite{GhirlandaGuerra}, holds:
\begin{equation}
< f c_{1,n+1}>_{\beta,N}   =  \frac{1}{n}< f >_{\beta,N}< c >_{\beta,N} +
\end{equation}
\begin{equation}\nonumber
+ \frac{1}{n}\sum_{j=2}^{n}< fc_{1,j} >_{\beta,N}  \; ,
\end{equation}
where the term $c_{1,n+1}$ is the overlap between a spin configuration of the set $\{1,2,...,n\}$ and and external
one that we enumerate as the $(n+1)$-st, and $c_{1,j}$ is the overlap between two generic spin configurations among
the $n$'s.

The proof ideas can be easily summarized by the study of three quantities
and their differences which encode the fluctuation properties of the spin
glass system:
\begin{equation}
\mathbb{E}\left[\omega(H_N^2)\right]  \; ,
\end{equation}
\begin{equation}
\mathbb{E}\left[\omega(H_N)^2\right]  \; ,
\end{equation}
\begin{equation}
\mathbb{E}\left[\omega(H_N)\right]^2  \; .
\end{equation}
The result is obtained by two bounds for constants $\epsilon^{(1)}_N$ and $\epsilon^{(2)}_N$
vanishing in the $N \to \infty$ limit:
\begin{itemize}
\item bound on averaged thermal fluctuations
\begin{equation}
\mathbb{E}\left[\omega(H_N^2)-\omega(H_N)^2\right] \le \epsilon_N^{(1)} N
\end{equation}
obtained by stochastic Stochastic Stability method (see \cite{AizenmanContucci}) by showing that
the addition of an independent term of order one to the Hamiltonian
is equivalent to a small change in temperature of the entire system:
\begin{equation}
\beta H_N(\sigma) \to \beta H_N(\sigma) + \sqrt{\frac{\lambda}{N}}{\tilde H}_N(\sigma)
\end{equation}
\begin{equation}
\beta \to \sqrt{\beta^2 +\frac{\lambda}{N}}
\end{equation}
\item bound on disorder fluctuations
\begin{equation}
U=\omega(H_N)
\end{equation}
\begin{equation}
\mathbb{E}(U^2)-\mathbb{E}(U)^2 \le \epsilon_N^{(2)} N\,  ,
\end{equation}
which is the self averaging of internal energy and can be proved from self averaging
of the free energy (with martingale methods or concentration of measures).
\end{itemize}
In the following sections we show how to use the previous ideas to obtain a stronger
result, namely the validity of the previous properties in distribution for quenched
probability measure of the Hamiltonian covariance.

The paper is organised as follows: Section 2 introduces the basic definitions to construct the 
class of models which satisfies thermodynamic stability. Section 3 shows the identities which 
can be derived from such properties. Section 4 deals with the rate of convergence in the thermodynamic limit.
Last section contains the discussion on the connection between those identities
and the Parisi's ultrametric property.

\section{Definitions and preliminary properties}

The model we chose to work with is the most general spin glass in $d$-dimension
and at the same time the closest to physical reality. Physical particles in fact, beside
interacting in pairs have always higher order interactions, i.e. they interact in triples, quadruples etc.
(see \cite{Ruelle}). Given a lattice in dimension $d$, for example $\Z^d$ we consider,
for each finite set $\Lambda \subset \Z^d$, an Hamiltonian of the form
\begin{equation}\label{h_alaruelle}
H_{\Lambda}(\sigma)\; = \;
- \sum_{X \subseteq \Lambda} J_{\Lambda,X} \sigma_X\, ,
\end{equation}
where $\sigma_X = \prod_{x \in X} \sigma_x$ and where all the random couplings $J_{\Lambda,X}$ are independent
centered Gaussian random variables with $\E[J_{\Lambda,X}^2]\, =\, \Delta_{\Lambda,X}^2$ for some nonnegative
constants $(\Delta_{\Lambda,X})_{X \subseteq \Lambda}$.

The thermodynamical properties of the previous models are encoded within the quenched
distribution of its normalized covariance
\begin{equation}\label{g_covariance}
c_{\Lambda}(\sigma,\tau)\;=\; \frac{1}{|\Lambda|}
 \E[H_{\Lambda}(\sigma) H_{\Lambda}(\tau)]\;=\;
\frac{1}{|\Lambda|}\sum_{X \subseteq \Lambda} \Delta^2_{\Lambda,X} \sigma_X \tau_X\, .
\end{equation}

The condition for existence of the thermodynamic limit (in the sense of Fisher)
called ``thermodynamic stability'' is:
\begin{equation}\label{thestability}
\sup_{\Lambda \subseteq \Z^d} \frac{1}{|\Lambda|}\, \sum_{X \subseteq \Lambda} \Delta^2_{\Lambda,X}\,  \le \, c\, <\, \infty\, ,
\end{equation}
see \cite{ContucciGiardina, ContucciGiardinaMonograph}.

In order to introduce the necessary language to illustrate our results we start by the following:

\begin{lemma}\label{lem:semialgebra1} Let $c_{\Lambda}$ and $c'_{\Lambda}$ be two normalised covariances
of Gaussian spin glasses satisfying the condition of thermodynamic stability. Then the same condition is satisfied
by the normalised covariance obtained through the operations below:

\begin{itemize}
\item $c_{\Lambda}+c_{\Lambda}'$, entry-wise addition
\item $x^2 c_{\Lambda}$ for each $x \in \R$, scalar multiplication
\item $c_{\Lambda} c_{\Lambda}'$, entry-wise multiplication.
\end{itemize}
\end{lemma}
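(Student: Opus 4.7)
The plan is, for each of the three operations, to exhibit a nonnegative family of variances $(\widetilde{\Delta}^2_{\Lambda,Z})_{Z \subseteq \Lambda}$ such that the kernel in question equals
\[
\frac{1}{|\Lambda|}\sum_{Z \subseteq \Lambda} \widetilde{\Delta}^2_{\Lambda,Z}\,\sigma_Z \tau_Z,
\]
and then to verify the thermodynamic stability bound \eqref{thestability} for those new variances. Once the variances are in hand, any centered independent Gaussian family $\widetilde J_{\Lambda,Z}$ with those variances realizes the new covariance as a bona fide spin glass of the form \eqref{h_alaruelle}, so no separate existence argument is needed.

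For entry-wise addition I take $\widetilde{\Delta}^2_{\Lambda,X} = \Delta^2_{\Lambda,X} + (\Delta'_{\Lambda,X})^2$, which also arises by summing two independent Hamiltonians $H_\Lambda + H'_\Lambda$; the stability constant is then at most $c + c'$. For scalar multiplication by $x^2$ I take $\widetilde{\Delta}^2_{\Lambda,X} = x^2\Delta^2_{\Lambda,X}$, equivalently rescaling the couplings by $|x|$, with stability constant at most $x^2 c$. Both cases are pure bookkeeping.

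The Hadamard product is the only step that requires a genuine observation. The trick is the Ising identity $\sigma_X \sigma_Y = \sigma_{X\triangle Y}$, an immediate consequence of $\sigma_x^2 = 1$, and analogously for $\tau$. Expanding the product $c_\Lambda c'_\Lambda$ term by term and regrouping according to $Z = X \triangle Y$ naturally suggests
\[
\widetilde{\Delta}^2_{\Lambda,Z} \;=\; \frac{1}{|\Lambda|}\sum_{\substack{X,Y \subseteq \Lambda\\ X\triangle Y = Z}} \Delta^2_{\Lambda,X}\,(\Delta'_{\Lambda,Y})^2,
\]
which is manifestly nonnegative. Summing over $Z$ decouples the double sum and gives $|\Lambda|^{-1}\sum_Z \widetilde{\Delta}^2_{\Lambda,Z}$ equal to the product of the two individual thermodynamic-stability sums, bounded by $c c' < \infty$.

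I do not expect any real obstacle here; the only subtle point is the extra $|\Lambda|^{-1}$ built into the definition of $\widetilde{\Delta}^2_{\Lambda,Z}$ above, which is forced by the normalization convention in \eqref{g_covariance}. Multiplying two normalized covariances introduces a spurious factor of $|\Lambda|$ in the denominator, and this must be absorbed into the new variances rather than left floating, or else one would be proving the stability of $|\Lambda|^{-1} c_\Lambda c'_\Lambda$ instead of $c_\Lambda c'_\Lambda$. Beyond this normalization check, everything is purely algebraic and rests on the two-valued nature of the spins.
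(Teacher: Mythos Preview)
Your proof is correct, and it takes a somewhat different route from the paper's. The paper argues abstractly: it observes that each operation yields a positive semidefinite kernel (citing Schur's lemma for the Hadamard product), checks the thermodynamic stability bound by evaluating on the diagonal $c_\Lambda(\sigma,\sigma)$, and then defers the actual realization as a Hamiltonian of type \eqref{h_alaruelle} to an ``explicit inversion formula'' in the monograph \cite{ContucciGiardinaMonograph}. You instead construct the new variances $\widetilde{\Delta}^2_{\Lambda,Z}$ directly, with the key step for the Hadamard product being the Ising identity $\sigma_X\sigma_Y=\sigma_{X\triangle Y}$; this makes the argument fully self-contained and elementary, bypassing both Schur's lemma and the external reference. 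The paper's approach is slightly more conceptual and would generalize to covariances not of the specific polynomial form, but within the setting at hand your constructive version is cleaner and in fact supplies exactly the inversion formula the paper only cites. Your remark about the extra $|\Lambda|^{-1}$ in the product case is on point and is the one place where a careless reader could slip.
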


\begin{proof}
We first observe that the three considered operations define new covariances, in particular
the last covariance is sometimes called the Schur product, Hadamard
product or Schur-Hadamard product and is semidefinite positive by a lemma of Schur.

The conditions of thermodynamic stability for $c_{\Lambda}$ and $c_{\Lambda}'$ are
$$
\sup_{\Lambda} c_{\Lambda}(\sigma,\sigma)\, <\, \infty\quad \text{ and } \quad
\sup_{\Lambda} c'_{\Lambda}(\sigma,\sigma)\, < \infty\, ,
$$
and immediately imply that
\begin{gather*}
\sup_{\Lambda} [c_{\Lambda}(\sigma,\sigma) + c'_{\Lambda}(\sigma,\sigma)]\, <\, \infty\, ,\\
\sup_{\Lambda} x^2 c_{\Lambda}(\sigma,\sigma)\, <\, \infty\quad \text{ and}\\
\sup_{\Lambda} c_{\Lambda}(\sigma,\sigma) c_{\Lambda}'(\sigma,\sigma)\, <\, \infty\, .
\end{gather*}
The explicit inversion formula from the covariance to the Hamiltonian can be seen from Chapter 2 of \cite{ContucciGiardinaMonograph}.
\end{proof}

The previous lemma says that the set of the thermodynamically stable covariances is closed under
the three operations defined.

By the previous lemma, starting from the Hamiltonian (\ref{h_alaruelle}) we can always construct a thermodynamically stable Hamiltonian, that we call complete Hamiltonian, defined by
\begin{equation}\label{h_MP}
{\bar H}_{\Lambda}(\sigma;\beta):=\sum_{p\geq1}(\sqrt{c})^{-p}\beta_pH^{(p)}_{\Lambda}(\sigma),
\end{equation}
where $H^{(1)}_{\Lambda}(\sigma)\equiv H_{\Lambda}(\sigma)$ is the Hamiltonian (\ref{h_alaruelle})
and each $p$-term in the sum has a normalized covariance
$$c^{(p)}_{\Lambda}(\sigma,\tau)=[c_{\Lambda}(\sigma,\tau)]^p$$
and the family of parameters $\beta=(\beta_p)_{p\geq1}$ is such that $\beta_p>0$ for every $p$ and fulfills the condition
$$
\sum_{p\geq1}\beta^2_p=\overline{c}<\infty
$$

A simple computation shows that the complete Hamiltonian has a covariance
$$ \bar {c} _{\Lambda}(\sigma,\tau)= \sum_{p\geq1}(c)^{-p}\beta^2_p [c_{\Lambda}(\sigma,\tau)]^p$$
and is thermodynamically stable with constant $\bar {c}$.
\newline

Consider $n$ copies of the configuration space denoted by  $\sigma^{1},\ldots, \sigma^{n}$ and,
for every bounded function $f:(\sigma^{1},\ldots, \sigma^{n})\rightarrow \mathbb{R}$, we call the random $n$-Gibbs state the following r.v.
\be\label{gibbs}
\Omega_{\Lambda,\beta}(f):=\sum_{\sigma^{1},\ldots, \sigma^{n}}
f(\sigma^{1},\ldots, \sigma^{n})\mathcal{G}_{\Lambda,\beta}(\sigma^{1})\ldots \mathcal{G}_{\Lambda,\beta}(\sigma^{n})
\ee
where
\be\label{gibbs2}
\mathcal{G}_{\Lambda,\beta}(\sigma):=\frac{\exp(-{\bar H}_{\Lambda}(\sigma;\beta))}{\sum_{\sigma}\exp(-{\bar H}_{\Lambda}(\sigma;\beta))}
\ee
is the random Gibbs measure. In the previous formula the dependence on the physical $\beta$ is
reabsorbed in the family of $\beta_p$'s.

We can define the quenched Gibbs state as
\be\label{gibbs3}
\langle f \rangle_{\Lambda,\beta}:=\E\Omega_{\Lambda,\beta}(f)
\ee

\section{Identities}
\begin{theorem}\label{thm:main}
The model defined by equation (\ref{h_MP}) satisfies with respect
to the covariance (\ref{g_covariance}) the following properties:
\newline

(i) It is stochastically stable in the strong sense, i.e for every power $p \in \N$ and for almost every $\beta_p$,
the following hold
\be\label{thm:SACI}
\lim_{\Lambda \nearrow \Z^d}
\Big\langle \sum_{\substack{j,k=1\\j\neq k}}^{n} f c^p_{j,k}
- 2 n f \sum_{k=1}^{n} c^p_{k,n+1} + n(n+1) f c^p_{n+1,n+2}
\Big\rangle_{\Lambda,\beta}\,
=\, 0\, ,
\ee
where for any number of replicas $\sigma^{(1)},\sigma^{(2)},\dots$, we denote by $c_{j,k}$
the quantity $c_{\Lambda}(\sigma^{(j)},\sigma^{(k)})$, and where
we assume that $f$ is a continuous function of all the variables
$c_{j,k}$ for $1\leq j<k\leq n$.
\newline

(ii) It
fulfills the Ghirlanda-Guerra identities (GG for short)
in distribution, i.e.
the following identities are verified for every $n\geq2$ and every function $f$ of $(c_{j,k})_{j,k=1}^{n}$ as above, and every power $p \in \N$ and for almost every $\beta_p$.
\be
\label{thm:EGGI}
\lim_{\Lambda \nearrow \Z^d}
\Big[\langle f c_{n+1,n+2}^p \rangle_{\Lambda,\beta}
- \frac{1}{n+1}\, \sum_{k=1}^{n} \langle f c_{k,n+1}^p \rangle_{\Lambda,\beta}
- \frac{1}{n+1}\, \langle f\rangle_{\Lambda,\beta}
\langle c_{1,2}^p \rangle_{\Lambda,\beta}\Big]\, =\, 0\, ,
\ee
Moreover, let
$$
\bar{p}(\beta):=\lim_{\Lambda \nearrow \Z^d}\frac{1}{|\Lambda|}\E[\log \sum_{\sigma}\exp(-\bar{H}_{\Lambda}(\sigma,\beta))]
$$
be the thermodynamical limit of the pressure.
\newline

If $\bar{p}(\beta)$ is differentiable in the $\beta_p$ "direction" at the point $\beta_p=a$ then
\newline

(iii) It
is \textbf{pointwise} stochastically stable in the strong sense, i.e for every power $p \in \N$ and in each point  $\beta_p=a$, (\ref{thm:SACI}) hold.
\newline

(iv) It
fulfills the Ghirlanda-Guerra identities
in distribution \textbf{pointwise}, i.e for every $n\geq2$ and for every power $p \in \N$ and in each point  $\beta_p=a$, (\ref{thm:EGGI}) holds.
\end{theorem}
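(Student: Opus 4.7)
The plan is to derive all four parts from convexity in $\beta_p$ of the finite-volume pressure, self-averaging of the free energy, and Gaussian integration by parts. Define
\[\bar p_\Lambda(\beta) = \frac{1}{|\Lambda|}\,\E\Big[\log\sum_\sigma e^{-\bar H_\Lambda(\sigma;\beta)}\Big],\]
and observe that differentiating twice under the expectation yields
\[\partial_{\beta_p}^2 \bar p_\Lambda(\beta) = \frac{(\sqrt{c})^{-2p}}{|\Lambda|}\,\E\Big[\Omega_{\Lambda,\beta}\big((H^{(p)})^2\big) - \Omega_{\Lambda,\beta}\big(H^{(p)}\big)^2\Big]\geq 0,\]
so $\bar p_\Lambda$ is convex in each coordinate $\beta_p$. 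The thermodynamic limit $\bar p$ inherits convexity in $\beta_p$ and is therefore differentiable at almost every point; by Griffiths' lemma (convex convergence preserves derivatives at points of differentiability of the limit), $\partial_{\beta_p}\bar p_\Lambda \to \partial_{\beta_p}\bar p$ at every such point.

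Next, I would couple this with self-averaging of the random pressure $p_\Lambda = |\Lambda|^{-1}\log Z_\Lambda$, a consequence of thermodynamic stability (\ref{thestability}) via Gaussian concentration. The two vanishing-variance bounds advertised in the introduction,
\[\frac{1}{|\Lambda|}\,\E\Big[\Omega_{\Lambda,\beta}\big((H^{(p)})^2\big) - \Omega_{\Lambda,\beta}\big(H^{(p)}\big)^2\Big]\to 0,\qquad \frac{\E[\Omega_{\Lambda,\beta}(H^{(p)})^2] - \E[\Omega_{\Lambda,\beta}(H^{(p)})]^2}{|\Lambda|^2}\to 0,\]
then follow at almost every $\beta_p$. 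The first (thermal-fluctuation) bound uses the Aizenman-Contucci perturbation $\beta_p\to\sqrt{\beta_p^2+\lambda/|\Lambda|}$, equivalent in law to adding an order-one Gaussian copy $\tilde H^{(p)}$; the resulting perturbed pressure is convex in $\sqrt{\lambda}$ and, because the added variance is of order one, converges to $\bar p$ independently of $\lambda$, so Griffiths' lemma forces its second derivative in $\lambda$ — which equals the normalized thermal variance — to vanish at a.e. $\beta_p$. The second (disorder-fluctuation) bound comes from self-averaging combined with convexity of $p_\Lambda$ in $\beta_p$, which forces $\partial_{\beta_p}p_\Lambda$ to concentrate around $\partial_{\beta_p}\bar p_\Lambda$ at a.e. $\beta_p$ and hence constrains $\E[\Omega(H^{(p)})]^2$ against $\E[\Omega(H^{(p)})^2]$.

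Third, I insert a bounded continuous test function $f(c_{j,k}\colon 1\leq j<k\leq n)$ into the two variance bounds and apply Gaussian integration by parts to each $J^{(p)}_X$ factor. Each such application contracts a pair $H^{(p)}(\sigma^a)H^{(p)}(\sigma^b)$ into $|\Lambda|\,c_{a,b}^p$, while the derivative of each Gibbs factor $G(\sigma^a)$ with respect to $J^{(p)}_X$ introduces the fresh replicas $n+1$ and $n+2$. Collecting the combinatorial coefficients — $n(n-1)$ from internal pairs among $\{1,\ldots,n\}$, $-2n$ from cross pairs with $\sigma^{n+1}$, and $n(n+1)$ from the double-new pair $(\sigma^{n+1},\sigma^{n+2})$ — reproduces exactly the linear combination in (\ref{thm:SACI}), which proves (i). The Ghirlanda-Guerra identity (ii) follows from the analogous integration by parts applied to $|\Lambda|^{-1}\,\E[\Omega_{\Lambda,\beta}(fH^{(p)}) - \Omega_{\Lambda,\beta}(f)\,\Omega_{\Lambda,\beta}(H^{(p)})]$: once the disorder-variance piece is discarded via the second bound above, the remaining overlap combination rearranges directly into (\ref{thm:EGGI}).

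Finally, for (iii) and (iv): if $\bar p$ is differentiable in the $\beta_p$-direction at the specific point $\beta_p=a$, Griffiths' lemma already delivers pointwise derivative convergence at $a$ and the whole chain runs pointwise rather than almost everywhere. The main obstacles I anticipate are (a) the careful replica bookkeeping in the Gaussian integration by parts, verifying that the coefficients $n(n-1)$, $-2n$, $n(n+1)$ come out exactly as in (\ref{thm:SACI}); and (b) converting integrated convexity/Griffiths bounds into pointwise vanishing at almost every $\beta_p$, which in general requires an additional Chebyshev or difference-quotient step. The pointwise statements (iii), (iv) bypass (b) via the differentiability hypothesis on $\bar p$. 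The essential role of the complete Hamiltonian (\ref{h_MP}) is that including every power $p\geq 1$ while keeping thermodynamic stability bounded by $\overline{c}$ permits independent perturbations, and hence independent identities, in every $\beta_p$-direction simultaneously.
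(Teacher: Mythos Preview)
Your outline follows the same architecture as the paper's proof: convexity of the pressure in $\beta_p$, self-averaging of $p_\Lambda$, and Gaussian integration by parts to convert fluctuation control into the overlap identities. The disorder-fluctuation piece and the derivation of (iii)--(iv) from the differentiability hypothesis via a Griffiths-type argument are fine in spirit. However, there is a genuine error in the thermal-fluctuation step.

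Your first displayed bound, $\frac{1}{|\Lambda|}\E\big[\Omega((H^{(p)})^2)-\Omega(H^{(p)})^2\big]\to 0$, is (by your own second-derivative formula) equivalent to $\partial_{\beta_p}^2\bar p_\Lambda(\beta)\to 0$. This is \emph{false} in general: integrating over any interval gives $\partial_{\beta_p}\bar p_\Lambda(\beta_2)-\partial_{\beta_p}\bar p_\Lambda(\beta_1)$, which converges to the (typically nonzero) increment of $\partial_{\beta_p}\bar p$, so the second derivative is $O(1)$ on average, not $o(1)$. Correspondingly, the sentence ``Griffiths' lemma forces its second derivative in $\lambda$ \dots\ to vanish'' is not right: Griffiths' lemma controls \emph{first} derivatives only. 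What you actually need, and what suffices for the identities, is one more power of $|\Lambda|$ in the denominator, i.e.\ $\langle (h'-\Omega(h'))^2\rangle=\partial_{\beta_p}^2\bar p_\Lambda/|\Lambda|$; its $\beta_p$-integral is then $O(1/|\Lambda|)$ because $|\partial_{\beta_p}\bar p_\Lambda|\le C|\beta_p|$ by Gaussian integration by parts. From there the integrated (a.e.) identities follow by Cauchy--Schwarz with the bounded test function $f$. The paper's proof does exactly this for the integrated $L^2$ bound (its Theorem~4.1(a)), but to get the $L^1$ control $\langle|h'-\Omega(h')|\rangle\to 0$ pointwise in $\beta_p$ it uses a finer Talagrand-type estimate (Propositions~3.3--3.5): one introduces $D_\Lambda(a,b)=p'_\Lambda(a+b)-p'_\Lambda(a-b)$ and bounds $\psi_\Lambda(x)=\Omega(|h'(\sigma^{(1)})-h'(\sigma^{(2)})|)$ via $\psi_\Lambda^2\le 4p''_\Lambda/|\Lambda|$ and $|\psi'_\Lambda|\le 8p''_\Lambda$, then averages over a small window $[a-b,a+b]$ and lets $b\to 0$ after $|\Lambda|\to\infty$. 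This is the ``additional difference-quotient step'' you flagged as obstacle (b), and it is where the real work lies; the Aizenman--Contucci perturbation you invoke is the heuristic from the paper's introduction, not the mechanism of the actual proof.
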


\begin{remark}we notice  that since the function $\overline{p}(\beta)$ is convex in each $\beta_p$ then it's almost everywhere differentiable and then we have that $(iv)\Rightarrow (ii)$, but in the next sections we give an independent proof of $(ii)$.
\end{remark}

\textbf{Proof of Theorem \ref{thm:main}}: We fix an arbitrary $p\geq1$ and to lighten the notation we put:

$$\beta_p\rightarrow x$$
$$\sum_{k\neq p}(\sqrt{c})^{-k}\beta_kH^{(k)}_{\Lambda}(\sigma)\rightarrow
H_{\Lambda}(\sigma)$$
$$(\sqrt{c})^{-p}\beta_pH^{(p)}_{\Lambda}(\sigma)\rightarrow H'_{\Lambda}(\sigma)$$
Then the Hamiltonian defined in (\ref{h_MP}) becomes
$$\bar{H}_{\Lambda}(\sigma;\beta)\rightarrow H_{\Lambda}(\sigma;x)=H_{\Lambda}(\sigma)+xH'_{\Lambda}(\sigma)$$
To prove the theorem we recall a general result due to Panchenko \cite{PanchenkoMP}. Consider a general Hamiltonian of the type
$$H_{\Lambda}(\sigma;x)=H_{\Lambda}(\sigma)+xH'_{\Lambda}(\sigma)$$
where $x$ is a real parameter and the families $(H(\sigma))_{\sigma}$ and $(H'(\sigma))_{\sigma}$ are independent jointly Gaussian families of centered r.v.s. of the type
(\ref{h_alaruelle}).
Suppose that the Hamiltonian are thermodynamically stable in the sense of (\ref{thestability}), that is there exists a global constant $c$ such that

$$\E H_{\Lambda}(\sigma)^2\leq |\Lambda|c$$
\be\label{termosta}
\E H'_{\Lambda}(\sigma)^2\leq |\Lambda|c
\ee
Consider, the following basic quantities:

$$Z_{\Lambda}(x):=\sum_{\sigma}\exp (-H_{\Lambda}(\sigma;x))$$

$$p_{\Lambda}(x):=\frac{1}{|\Lambda|}\log Z_{\Lambda}(x)$$
\be\label{defpressure}
p(x):=\lim_{\Lambda \nearrow \Z^d}\E p_{\Lambda}(x)
\ee
Notice that existence of the limit in the last definition is ensured (see for example \cite{ContucciGiardinaMonograph}) by the conditions (\ref{termosta}). We define $\Omega_{\Lambda,x}(\;)$ and $\langle \; \rangle_{\Lambda,x}$ in the same way as in
(\ref{gibbs}) and  (\ref{gibbs3}).
\newline

In the previous setting we have the following lemma:
\begin{lemma} \label{almpoint}

If we denote  ${h'_{\Lambda}(\sigma)}:=\frac{1}{|\Lambda|}H'_{\Lambda}(\sigma)$ the Hamiltonian density, then we have that for every $\beta_1<\beta_2$

\be\label{ACaverage}
\lim_{\Lambda \nearrow \Z^d}\int^{\beta_2}_{\beta_1}\Big\langle \Big|h'_{\Lambda}(\sigma)- \Omega_{{\Lambda},a}\Big(h'(\sigma)\Big)\Big|\Big\rangle_{{\Lambda},a}da=0
\ee

\be\label{GGaverage}
\lim_{\Lambda \nearrow \Z^d}\int^{\beta_2}_{\beta_1}\Big\langle \Big|h'_{\Lambda}(\sigma)- \langle h'_{\Lambda}(\sigma)\rangle_{\Lambda,a}\Big|\Big\rangle_{\Lambda,a}da=0
\ee

On other hand, if we assume that $p(x)$
 is differentiable at $x = a$, then
\be\label{ACpoint}
\lim_{\Lambda \nearrow \Z^d}\Big\langle \Big|h'_{\Lambda}(\sigma)- \Omega_{{\Lambda},a}\Big(h'(\sigma)\Big)\Big|\Big\rangle_{{\Lambda},a}=0
\ee

\be\label{GGpoint}
\lim_{\Lambda \nearrow \Z^d}\Big\langle \Big|h'_{\Lambda}(\sigma)- \langle h'_{\Lambda}(\sigma)\rangle_{\Lambda,a}\Big|\Big\rangle_{\Lambda,a}=0
\ee
\end{lemma}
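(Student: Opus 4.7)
\textbf{Plan of proof of Lemma~\ref{almpoint}.} The strategy, following Panchenko~\cite{PanchenkoMP}, is to exploit convexity of the finite-volume pressure $p_\Lambda(x,\omega)=\tfrac{1}{|\Lambda|}\log Z_\Lambda(x)$ in the parameter $x$, together with Gaussian self-averaging of the free energy. First I would observe that both $p_\Lambda(\cdot,\omega)$ and its disorder average $\bar p_\Lambda(\cdot):=\E\,p_\Lambda(\cdot,\omega)$ are convex in $x$, since
\begin{equation*}
\partial_x p_\Lambda(x,\omega)=-\Omega_{\Lambda,x}(h'_\Lambda),\qquad
\partial_x^2 p_\Lambda(x,\omega)=|\Lambda|\bigl[\Omega_{\Lambda,x}((h'_\Lambda)^2)-\Omega_{\Lambda,x}(h'_\Lambda)^2\bigr]\ge 0.
\end{equation*}
By convexity and the pointwise convergence $\bar p_\Lambda\to p$, the slopes $\bar p'_\Lambda$ are uniformly bounded on any compact subinterval of the real line.

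For the thermal statement (\ref{ACaverage}), I would integrate the above expression for $\partial_x^2 p_\Lambda$ over $a\in[\beta_1,\beta_2]$ and take the disorder expectation. The fundamental theorem of calculus yields
\begin{equation*}
\int_{\beta_1}^{\beta_2}\E\bigl[\Omega_{\Lambda,a}\bigl((h'_\Lambda-\Omega_{\Lambda,a}(h'_\Lambda))^2\bigr)\bigr]\,da
=\frac{\bar p'_\Lambda(\beta_2)-\bar p'_\Lambda(\beta_1)}{|\Lambda|}=O\bigl(|\Lambda|^{-1}\bigr),
\end{equation*}
and a Cauchy--Schwarz on the finite interval converts this $L^2$-type control into the $L^1$-type bound (\ref{ACaverage}). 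For the quenched statement (\ref{GGaverage}) I would add to this a control on the disorder fluctuation of $\Omega_{\Lambda,a}(h'_\Lambda)=-\partial_x p_\Lambda(a,\omega)$: Gaussian concentration of the free energy (available from thermodynamic stability (\ref{termosta})) gives $\E[(p_\Lambda(a)-\bar p_\Lambda(a))^2]\le C/|\Lambda|$ uniformly on compacts; combined with convexity of both $p_\Lambda(\cdot,\omega)$ and $\bar p_\Lambda$, the classical principle that $L^2$-closeness of convex functions transfers to $L^1_{\mathrm{loc}}$-closeness of their derivatives gives $\int_{\beta_1}^{\beta_2}\E|\partial_x p_\Lambda(a)-\bar p'_\Lambda(a)|\,da\to 0$. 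Adding the two pieces via the triangle inequality produces (\ref{GGaverage}).

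Finally, for the pointwise statements under differentiability of $p$ at $a$, I would rely on the classical theorem on convergence of derivatives of convex functions: $\bar p'_\Lambda(a)\to p'(a)$. Coupled with Gaussian concentration of $p_\Lambda$ at $x=a\pm\epsilon$ and a squeeze by difference quotients, this upgrades to $\Omega_{\Lambda,a}(h'_\Lambda)\to -p'(a)$ in $L^1$ under the disorder, i.e.\ (\ref{GGpoint}). For the pointwise thermal statement (\ref{ACpoint}) I would redo the integration argument on a shrinking window $[a-\epsilon_\Lambda,a+\epsilon_\Lambda]$: the right-hand side becomes $(\bar p'_\Lambda(a+\epsilon_\Lambda)-\bar p'_\Lambda(a-\epsilon_\Lambda))/(|\Lambda|\,\epsilon_\Lambda)$, and a diagonal choice $\epsilon_\Lambda\to 0$ slowly enough forces this to vanish precisely because $p$ is differentiable at $a$; monotonicity of $x\mapsto \Omega_{\Lambda,x}(h'_\Lambda)$ (inherited from convexity) then localizes the estimate back to the single point $a$. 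This pointwise thermal step is the main obstacle: the second derivative of a convex function is not controlled pointwise by mere pointwise convergence of the function itself, so the differentiability hypothesis on $p$ at $a$ is the essential ingredient. All remaining steps reduce to three classical tools---convexity of the pressure, Gaussian concentration of the free energy via thermodynamic stability, and the convex-function derivative convergence theorem.
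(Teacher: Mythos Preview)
Your treatment of the integrated statements (\ref{ACaverage}) and (\ref{GGaverage}) is correct and in fact slightly cleaner than the paper's: integrating $p''_\Lambda$ directly and applying Cauchy--Schwarz is exactly what the paper does in Section~4 (Theorem~\ref{thm:rate}(a)), and the convexity-plus-concentration argument for the disorder fluctuation of $\Omega_{\Lambda,a}(h'_\Lambda)$ matches the paper's estimate (\ref{GGtalestimate}) via the quantities $D_\Lambda$ and $W_\Lambda$. One labeling issue: what you call ``i.e.\ (\ref{GGpoint})'' after showing $\Omega_{\Lambda,a}(h'_\Lambda)\to -p'(a)$ in $L^1(\E)$ is only the \emph{disorder} fluctuation $\E|\Omega_{\Lambda,a}(h'_\Lambda)-\langle h'_\Lambda\rangle_{\Lambda,a}|\to 0$; the full statement (\ref{GGpoint}) still needs the thermal piece (\ref{ACpoint}) added by the triangle inequality.

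The genuine gap is in your localization step for (\ref{ACpoint}). Your shrinking-window argument yields
\[
\frac{1}{2\epsilon_\Lambda}\int_{a-\epsilon_\Lambda}^{a+\epsilon_\Lambda}
\big\langle (h'_\Lambda-\Omega_{\Lambda,x}(h'_\Lambda))^2\big\rangle_{\Lambda,x}\,dx
=\frac{\bar p'_\Lambda(a+\epsilon_\Lambda)-\bar p'_\Lambda(a-\epsilon_\Lambda)}{2|\Lambda|\,\epsilon_\Lambda}\to 0,
\]
but this controls only an \emph{average} over the window, and you then invoke ``monotonicity of $x\mapsto\Omega_{\Lambda,x}(h'_\Lambda)$'' to pass to the value at $x=a$. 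That monotonicity says nothing about the thermal variance $x\mapsto \langle(h'_\Lambda-\Omega_{\Lambda,x}(h'_\Lambda))^2\rangle_{\Lambda,x}=|\Lambda|^{-1}\E p''_\Lambda(x)$, which is neither monotone nor a~priori comparable at nearby points. An integrable function with small windowed average can still be large at the center.

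The paper supplies exactly the missing regularity: setting $\psi_\Lambda(x)=\Omega_{\Lambda,x}\big(|h'_\Lambda(\sigma^{(1)})-h'_\Lambda(\sigma^{(2)})|\big)$, Proposition~\ref{psiestimate} proves not only $\psi_\Lambda^2\le 4|\Lambda|^{-1}p''_\Lambda$ but crucially the derivative bound $|\psi'_\Lambda(x)|\le 8\,p''_\Lambda(x)$. This second estimate is what lets one write $\psi_\Lambda(a)$ as its average over $[a-b,a+b]$ plus an error $\int_{a-b}^{a+b}|\psi'_\Lambda|\le 8\,D_\Lambda(a,b)$, giving the pointwise bound (\ref{ACtalestimate}). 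After that, the differentiability of $p$ at $a$ enters precisely as you describe, via $\lim_{b\to 0}\limsup_\Lambda \E D_\Lambda(a,b)=0$. So your diagnosis of the obstacle is right, but the cure is a Lipschitz-type control on $\psi_\Lambda$ in the parameter $x$, not monotonicity of the first derivative.
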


It is easy to check by a simple integration-by-parts and a uniform norm bound that the relations (\ref{ACaverage}), (\ref{GGaverage}), (\ref{ACpoint}), (\ref{GGpoint}) implies the propositions $i), ii), iii), iv)$ of Theorem \ref{thm:main}, respectively. We notice that the propositions $i), ii)$ are in almost every sense then in this case the proof of previous implication requires some elementary facts in measure theory which are explained in Remark \ref{almostevery}.\\

\textbf{Proof of Lemma \ref{almpoint}}: The strategy of the proof is to control all terms
by the following estimation, which is essentially contained in Chapter 12 of \cite{Talagrand2}.

\begin{proposition}\label{talestimate}
For every $b>0$ we have that
\be\label{ACtalestimate}
\Big\langle \Big|h'_{\Lambda}(\sigma)- \Omega_{{\Lambda},a}\Big(h'(\sigma)\Big)\Big|\Big\rangle_{{\Lambda},a}\leq\sqrt{\frac{2\E D_{\Lambda}(a,b)}{b|\Lambda|}}+8\E D_{\Lambda}(a,b)
\ee
\be\label{GGtalestimate}
\Big\langle \Big|\Omega_{{\Lambda},a}\Big(h'(\sigma)\Big)- \langle h'_{\Lambda}(\sigma)\rangle_{{\Lambda},a}\Big|\Big\rangle_{{\Lambda},a}\leq \E D_{\Lambda}(a,b)+\E W_{\Lambda}(a,b)
\ee
where

$$D_{\Lambda}(x,b):=p_{\Lambda}'(x+b)-p_{\Lambda}'(x-b)$$
$$W_{\Lambda}(x,b):=\frac{1}{b}\Big(|p_{\Lambda}(x+b)-\E p_{\Lambda}(x+b)|+|p_{\Lambda}(x-b)-\E p_{\Lambda}(x-b)|+|p_{\Lambda}(x)-\E p_{\Lambda}(x)|\Big)$$
\end{proposition}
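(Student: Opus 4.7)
The plan is to reduce both inequalities to statements about the smooth convex random function $x\mapsto p_\Lambda(x)$, using the identities $\Omega_{\Lambda,x}(h'_\Lambda) = -p_\Lambda'(x)$ and $\mathrm{Var}_{\Omega_{\Lambda,x}}(h'_\Lambda) = p_\Lambda''(x)/|\Lambda|$, obtained by differentiating $p_\Lambda$ once and twice in the parameter $x$. Setting $q:=\E p_\Lambda$ and $X:=h'_\Lambda-\Omega_{\Lambda,a}(h'_\Lambda)$, the GG-type left-hand side (\ref{GGtalestimate}) is just $\E|p_\Lambda'(a)-q'(a)|$, while the AC-type left-hand side (\ref{ACtalestimate}) is $\E\Omega_{\Lambda,a}(|X|)$. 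Everything to be proved is then a statement about the convex functions $p_\Lambda$ and $q$ and their secant differences.

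For the GG estimate, I would apply the one-sided secant inequalities for the convex functions $p_\Lambda$ and $q$,
\[
p_\Lambda'(a)\le\frac{p_\Lambda(a+b)-p_\Lambda(a)}{b},\qquad q'(a)\ge\frac{q(a)-q(a-b)}{b},
\]
together with the symmetric pair obtained by replacing $b$ with $-b$. Subtracting these and writing each $p_\Lambda(a\pm b)$ as $q(a\pm b)+(p_\Lambda(a\pm b)-q(a\pm b))$ isolates a deterministic ``convexity gap'' $(q(a+b)+q(a-b)-2q(a))/b$, which by convexity is bounded by $q'(a+b)-q'(a-b)=\E D_\Lambda(a,b)$, together with three pressure-fluctuation residuals that add up, after dividing by $b$, exactly to $W_\Lambda(a,b)$. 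Combining both directions gives $|p_\Lambda'(a)-q'(a)|\le \E D_\Lambda(a,b)+W_\Lambda(a,b)$ pointwise in the disorder, and (\ref{GGtalestimate}) follows after taking the quenched average.

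For the AC estimate, my starting point is the moment-generating-function identity
\[
\log\Omega_{\Lambda,a}\!\bigl(e^{\lambda X}\bigr) = |\Lambda|\bigl[p_\Lambda(a-\lambda/|\Lambda|)-p_\Lambda(a)+(\lambda/|\Lambda|)\,p_\Lambda'(a)\bigr],
\]
which by convexity is non-negative and, by the same secant estimate used for (\ref{GGtalestimate}), bounded by $|\lambda|D_\Lambda(a,b)$ whenever $|\lambda|\le b|\Lambda|$. The elementary inequality $1+u^2/2\le \cosh u$ applied at $\lambda = b|\Lambda|$ then extracts the second-moment bound $\Omega_{\Lambda,a}(X^2)\le 2(e^{b|\Lambda|D_\Lambda(a,b)}-1)/(b|\Lambda|)^2$; when $b|\Lambda|D_\Lambda(a,b)\le 1$, Taylor-expanding $e^u-1$ converts this via Cauchy--Schwarz and Jensen's inequality applied to $\sqrt{\,\cdot\,}$ into the principal $\sqrt{2\,\E D_\Lambda(a,b)/(b|\Lambda|)}$ contribution. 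In the complementary regime the same MGF estimate produces the sub-exponential tail $\Omega_{\Lambda,a}(|X|>t)\le 2e^{-b|\Lambda|(t-D_\Lambda(a,b))}$ for $t\ge D_\Lambda(a,b)$, and integrating it against $t$ contributes a term bounded by a constant multiple of $\E D_\Lambda(a,b)$, which is what the additive $8\,\E D_\Lambda(a,b)$ absorbs.

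The main technical obstacle is the AC bound: the pointwise Gibbs variance $p_\Lambda''(a)/|\Lambda|$ cannot be controlled by $\E D_\Lambda(a,b)/b$ alone, because convexity only bounds the integral $\int_{a-b}^{a+b}q''(x)\,dx=\E D_\Lambda(a,b)$ rather than its pointwise density at $a$. The $\cosh$/MGF device bypasses this by reading the second moment of $X$ off an exponential whose size is controlled by convexity on the \emph{entire} interval $[a-b,a+b]$, at the price of a sub-exponential (rather than Gaussian) tail, which forces the extra additive term of order $\E D_\Lambda(a,b)$. Tracking the precise constants $\sqrt 2$ and $8$ is routine bookkeeping in optimizing the truncation threshold and the MGF parameter $\lambda$; no conceptual ingredient beyond convexity, Markov's inequality, and the moment-generating-function computation is required.
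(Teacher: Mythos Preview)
Your proof of (\ref{GGtalestimate}) is essentially identical to the paper's: both use the one-sided secant inequalities for the convex functions $p_\Lambda$ and $q=\E p_\Lambda$, separate the deterministic convexity gap (bounded by $\E D_\Lambda(a,b)$) from the three pressure fluctuations that constitute $W_\Lambda(a,b)$, and combine the upper and lower bounds into an absolute value.

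For (\ref{ACtalestimate}), however, your route is genuinely different. The paper introduces the auxiliary function
$\psi_\Lambda(x)=\Omega_{\Lambda,x}\bigl(|h'_\Lambda(\sigma^{(1)})-h'_\Lambda(\sigma^{(2)})|\bigr)$,
establishes the differential inequalities $\psi_\Lambda^2\le 4p''_\Lambda/|\Lambda|$ and $|\psi_\Lambda'|\le 8p''_\Lambda$ (Proposition~\ref{psiestimate}), and then localizes $\psi_\Lambda(a)$ by averaging over $[a-b,a+b]$ via the identity $2b\,\psi_\Lambda(a)=\int_{a-b}^{a+b}\psi_\Lambda(y)\,dy+\int_{a-b}^{a+b}(\psi_\Lambda(a)-\psi_\Lambda(y))\,dy$, where the first integral supplies the $\sqrt{2\E D_\Lambda/(b|\Lambda|)}$ term and the second the $8\,\E D_\Lambda$ term. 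Your moment-generating-function argument bypasses $\psi_\Lambda$ and its derivative entirely: the cumulant identity $\log\Omega_{\Lambda,a}(e^{\lambda X})=|\Lambda|[p_\Lambda(a-\lambda/|\Lambda|)-p_\Lambda(a)+(\lambda/|\Lambda|)p'_\Lambda(a)]$ together with convexity yields the sub-exponential bound $\log\Omega_{\Lambda,a}(e^{\lambda X})\le|\lambda|D_\Lambda(a,b)$ for $|\lambda|\le b|\Lambda|$ directly. Splitting on $b|\Lambda|D_\Lambda\lessgtr 1$ and using, in the small regime, $e^u-1\le u+(e-2)u^2$ with the $\cosh$ trick and $\sqrt{A+B}\le\sqrt{A}+\sqrt{B}$, and in the large regime the Chernoff tail, gives the pointwise bound $\Omega_{\Lambda,a}(|X|)\le\sqrt{2D_\Lambda/(b|\Lambda|)}+3D_\Lambda$, which after Jensen is in fact sharper in the additive constant than the paper's $8$. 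What your approach buys is directness and a tighter constant; what the paper's buys is that all estimates are expressed through the single ``curvature'' quantity $p''_\Lambda$ and its integral $D_\Lambda$, which makes the later manipulations in Lemma~\ref{almpoint} entirely parallel for the two inequalities.
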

\textbf{Proof of Proposition \ref{talestimate}, equation (\ref{GGtalestimate})}: the function $p_{\Lambda}(x)$ is convex. Thus for every $b>0$ we have that
$$p'_{\Lambda}(x)\leq \frac{p_{\Lambda}(x+b)-p_{\Lambda}(x)}{b}\leq W_{\Lambda}(x,b)+\frac{\E \Big(p_{\Lambda}(x+b)-p_{\Lambda}(x)\Big)}{b}\leq W_{\Lambda}(x,b)+\E p'_{\Lambda}(x+b)\, ,$$
and then
$$p'_{\Lambda}(x)-\E p'_{\Lambda}(x)\leq W_{\Lambda}(x,b)+\E p'_{\Lambda}(x+b)-\E p'_{\Lambda}(x)\leq W_{\Lambda}(x,b)+\E D_{\Lambda}(x,b)\, .$$

On other hand,
$$p'_{\Lambda}(x)\geq -\frac{p_{\Lambda}(x-b)+p_{\Lambda}(x)}{b}$$
and then, after the same manipulations, we get
$$p'_{\Lambda}(x)-\E p'_{\Lambda}(x)\geq -W_{\Lambda}(x,b)-\E D_{\Lambda}(x,b)$$

Combining the two previous inequalities we obtain the following bound
\be\label{boundself}
|p'_{\Lambda}(x)-\E p'_{\Lambda}(x)|\leq W_{\Lambda}(x,b)+\E D_{\Lambda}(x,b)
\ee
This bound give immediately (\ref{GGtalestimate}), since
$$\Big\langle \Big|\Omega_{{\Lambda},a}\Big(h'(\sigma)\Big)- \langle h'_{\Lambda}(\sigma)\rangle_{{\Lambda},a}\Big|\Big\rangle_{{\Lambda},a}=\E\Big(\Big|p'_{\Lambda}(a)-\E p'_{\Lambda}(a)\Big|\Big)\leq \E W_{\Lambda}(a,b)+\E D_{\Lambda}(a,b)\, .\quad \Box$$

The bound (\ref{ACtalestimate}) requires an extra work.
\begin{proposition}\label{psiestimate}
Consider the quantity
$$\psi_{\Lambda}(x):=\Omega_{\Lambda,x}\Big(\Big|h'_{\Lambda}(\sigma^{(1)})-h'_{\Lambda}(\sigma^{(2)})\Big|\Big)$$
then we have that
$$\psi^2_{\Lambda}(x)\leq\frac{4}{|\Lambda|}p''_{\Lambda}(x)$$
$$|\psi'_{\Lambda}(x)|\leq 8 p''_{\Lambda}(x)$$
\end{proposition}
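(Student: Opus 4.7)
The plan is to reduce both inequalities to an identification of $p''_\Lambda(x)$ with a two-replica variance together with Cauchy--Schwarz. Direct differentiation of $p_\Lambda(x) = |\Lambda|^{-1}\log Z_\Lambda(x)$ gives $p'_\Lambda(x) = -\Omega_{\Lambda,x}(h'_\Lambda)$ and
$$
p''_\Lambda(x) \;=\; \frac{1}{|\Lambda|}\bigl[\Omega_{\Lambda,x}(H'^{\,2}_\Lambda) - \Omega_{\Lambda,x}(H'_\Lambda)^{2}\bigr]
\;=\; \frac{|\Lambda|}{2}\,\Omega_{\Lambda,x}\!\left[(h'_\Lambda(\sigma^{(1)})-h'_\Lambda(\sigma^{(2)}))^2\right],
$$
where the last step uses $h'_\Lambda = H'_\Lambda/|\Lambda|$ and the standard replica identity $\Omega(f^2)-\Omega(f)^2 = \tfrac12\Omega[(f(\sigma^{(1)})-f(\sigma^{(2)}))^2]$. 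In particular, $\Omega_{\Lambda,x}[(H'_\Lambda(\sigma)-\Omega_{\Lambda,x}(H'_\Lambda))^2] = |\Lambda|\,p''_\Lambda(x)$.

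For the first inequality, Jensen's inequality (equivalently Cauchy--Schwarz applied to $|\cdot|$) on the two-replica state gives
$$
\psi_\Lambda(x)^2 \;\le\; \Omega_{\Lambda,x}\!\left[(h'_\Lambda(\sigma^{(1)})-h'_\Lambda(\sigma^{(2)}))^{2}\right] \;=\; \frac{2\,p''_\Lambda(x)}{|\Lambda|} \;\le\; \frac{4\,p''_\Lambda(x)}{|\Lambda|},
$$
which is the claim (with a slightly suboptimal constant).

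For the second inequality, I would differentiate $\psi_\Lambda(x) = \Omega_{\Lambda,x}(g)$ with $g = |h'_\Lambda(\sigma^{(1)})-h'_\Lambda(\sigma^{(2)})|$. Since $g$ is $x$-independent, only the two-replica Gibbs measure contributes, and the standard derivative formula on replicated states yields
$$
\psi'_\Lambda(x) \;=\; -\Omega_{\Lambda,x}\!\left[g\,(H'_\Lambda(\sigma^{(1)})-\Omega_{\Lambda,x}(H'_\Lambda))\right] - \Omega_{\Lambda,x}\!\left[g\,(H'_\Lambda(\sigma^{(2)})-\Omega_{\Lambda,x}(H'_\Lambda))\right].
$$
Cauchy--Schwarz on each of the two terms, together with $\Omega_{\Lambda,x}(g^2) = 2p''_\Lambda(x)/|\Lambda|$ and the centered-variance identity from the setup, gives
$$
|\psi'_\Lambda(x)| \;\le\; 2\sqrt{\tfrac{2\,p''_\Lambda(x)}{|\Lambda|}}\cdot\sqrt{|\Lambda|\,p''_\Lambda(x)} \;=\; 2\sqrt{2}\,p''_\Lambda(x) \;\le\; 8\,p''_\Lambda(x).
$$

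The proof is essentially bookkeeping: the key is to systematically express the relevant two-replica quantities as functionals of $p''_\Lambda(x)$ and to track the factors of $|\Lambda|$ so that the per-spin normalizations cancel. The absolute value in $g$ is no obstacle because $g$ does not depend on $x$, so no pathwise differentiation of $|\cdot|$ is needed; the only structural input beyond calculus is convexity of $p_\Lambda(x)$, which ensures $p''_\Lambda(x)\ge 0$ so that the square roots above are real.
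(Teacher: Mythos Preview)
Your proof is correct and follows essentially the same strategy as the paper: Jensen for the first bound and the derivative formula for the replicated Gibbs state followed by Cauchy--Schwarz for the second, with everything reduced to the variance identity $p''_\Lambda(x)=|\Lambda|\,\Omega_{\Lambda,x}(V_1^2)$. The only cosmetic difference is that the paper expresses $\psi'_\Lambda(x)$ through a third replica as $|\Lambda|\,\Omega(|V_1-V_2|\,(V_1+V_2-2V_3))$ and then bounds via $|V_1-V_2|\,|V_1-V_3|$, whereas you center directly by $\Omega_{\Lambda,x}(H'_\Lambda)$ and apply Cauchy--Schwarz termwise; your route is slightly more economical and yields the sharper constant $2\sqrt{2}$ in place of $8$.
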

\textbf{Proof of Proposition \ref{psiestimate}:} During this proof we define for sake of simplicity the quantity
$$V_{l}:=h'_{\Lambda}(\sigma^{(l)})-\Omega_{\Lambda,x}\Big(h'_{\Lambda}(\sigma)\Big)$$
then we have that, for every $l,m$

$$\Omega_{\Lambda,x}\Big(V_{l}\Big)=\Omega_{\Lambda,x}\Big(V_{m}\Big)=\Omega_{\Lambda,x}\Big(V_{1}\Big)=0$$
$$\Omega_{\Lambda,x}\Big(V_{l}V_{m}\Big)=\Omega_{\Lambda,x}\Big(V_{l}\Big)\Omega_{\Lambda,x}\Big(V_{m}\Big)=
\Omega^2_{\Lambda,x}\Big(V_{1}\Big)=0$$
$$p''_{\Lambda}(x)=|\Lambda|\Omega_{\Lambda,x}\Big(V^{2}_{1}\Big)$$

By Jensen's inequality and the previous equations we can obtain the first bound of the proposition, indeed
$$\psi^2_{\Lambda}(x)\leq\Omega_{\Lambda,x}\Big(\Big(h'_{\Lambda}(\sigma^{(1)})-h'_{\Lambda}(\sigma^{(2)})\Big)^2\Big)=
\Omega_{\Lambda,x}\Big(\Big(V_{1}-V_{2}\Big)^2\Big)\leq\frac{4}{|\Lambda|}p''_{\Lambda}(x)
$$
The second bound follow easily by Cauchy-Schwarz inequality, indeed
 $$|\psi'_{\Lambda}(x)|=|\Lambda|\Omega_{\Lambda,x}\Big(\Big|h'_{\Lambda}(\sigma^{(1)})-h'_{\Lambda}(\sigma^{(2)})\Big|
 \Big(h'_{\Lambda}(\sigma^{(1)})+h'_{\Lambda}(\sigma^{(2)})-2h'_{\Lambda}(\sigma^{(3)})\Big)\Big)\leq$$
 $$|\Lambda|\Omega_{\Lambda,x}\Big(\Big|V_{1}-V_{2}\Big|\cdot\Big|V_{1}-V_{3}+V_{2}-V_{3}\Big|\Big)
 \leq2|\Lambda|\Omega_{\Lambda,x}\Big(\Big|V_{1}-V_{2}\Big|\Big|V_{1}-V_{3}\Big|\Big)\leq
 2|\Lambda|\Omega_{\Lambda,x}\Big(\Big(V_{1}-V_{2}\Big)^2\Big)\leq8p''_{\Lambda}(x)\Box
$$

The last proposition can be used to obtain
\begin{proposition}\label{abspsiestimate}
 Given $b > 0$ then
$$\E|\psi_{\Lambda}(x)|\leq\sqrt{\frac{2\E D_{\Lambda}(a,b)}{b|\Lambda|}}+8\E D_{\Lambda}(a,b)$$
\end{proposition}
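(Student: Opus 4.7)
The plan is to combine the two pointwise estimates of Proposition \ref{psiestimate} with the observation that although $\psi_{\Lambda}^2(x)\le \tfrac{4}{|\Lambda|}p''_{\Lambda}(x)$ cannot be controlled pointwise by $D_{\Lambda}(a,b)$, its \emph{average} over $[a-b,a+b]$ can, since the fundamental theorem of calculus gives $\int_{a-b}^{a+b} p''_{\Lambda}(x)\,dx = p'_{\Lambda}(a+b)-p'_{\Lambda}(a-b) = D_{\Lambda}(a,b)$. The derivative estimate $|\psi'_{\Lambda}|\le 8 p''_{\Lambda}$ is then used precisely to transfer the value of $\psi_{\Lambda}$ at the specific point $a$ to this averaged quantity.

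The steps I would carry out are as follows. First, for any $x\in[a-b,a+b]$, integrate $|\psi'_{\Lambda}|$ to get
$$
|\psi_{\Lambda}(a)-\psi_{\Lambda}(x)|\le \int_{a-b}^{a+b}|\psi'_{\Lambda}(t)|\,dt \le 8\int_{a-b}^{a+b}p''_{\Lambda}(t)\,dt = 8 D_{\Lambda}(a,b).
$$
Rearranging gives $\psi_{\Lambda}(a)\le \psi_{\Lambda}(x) + 8 D_{\Lambda}(a,b)$, and then averaging $x$ over $[a-b,a+b]$ yields
$$
\psi_{\Lambda}(a)\le \frac{1}{2b}\int_{a-b}^{a+b}\psi_{\Lambda}(x)\,dx + 8 D_{\Lambda}(a,b).
$$
Second, apply Cauchy--Schwarz to the integral of $\psi_{\Lambda}$ and then insert the first bound of Proposition \ref{psiestimate}:
$$
\left(\frac{1}{2b}\int_{a-b}^{a+b}\psi_{\Lambda}(x)\,dx\right)^{\!2}\le \frac{1}{2b}\int_{a-b}^{a+b}\psi_{\Lambda}^2(x)\,dx \le \frac{4}{2b|\Lambda|}\int_{a-b}^{a+b}p''_{\Lambda}(x)\,dx = \frac{2 D_{\Lambda}(a,b)}{b|\Lambda|}.
$$

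Finally, combine the two previous displays to obtain the deterministic (per-disorder) inequality
$$
\psi_{\Lambda}(a)\le \sqrt{\frac{2 D_{\Lambda}(a,b)}{b|\Lambda|}} + 8 D_{\Lambda}(a,b),
$$
and take expectations, using $\psi_{\Lambda}\ge 0$ together with Jensen's inequality applied to the concave square root, which gives $\E\sqrt{D_{\Lambda}(a,b)}\le \sqrt{\E D_{\Lambda}(a,b)}$. The hardest part, conceptually, is the first step: seeing that one must bridge the pointwise-at-$a$ quantity to the interval-averaged quantity by means of the derivative bound, because only then does the identity $\int p''_{\Lambda}=D_{\Lambda}$ become available to replace $p''_{\Lambda}$ by the convexity gap $D_{\Lambda}(a,b)$. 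Once that bridge is in place, everything reduces to Cauchy--Schwarz and Jensen.
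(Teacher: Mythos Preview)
Your proof is correct and follows essentially the same route as the paper's: both use the derivative bound $|\psi'_{\Lambda}|\le 8p''_{\Lambda}$ to pass from $\psi_{\Lambda}(a)$ to its average over $[a-b,a+b]$, then control that average via $\psi_{\Lambda}^{2}\le \tfrac{4}{|\Lambda|}p''_{\Lambda}$ and the identity $\int_{a-b}^{a+b}p''_{\Lambda}=D_{\Lambda}(a,b)$, and finally take expectation with Jensen's inequality on the square root. The only cosmetic difference is that the paper bounds the averaged term by first taking $\psi_{\Lambda}\le \tfrac{2}{\sqrt{|\Lambda|}}\sqrt{p''_{\Lambda}}$ pointwise and applying Jensen to $\sqrt{\cdot}$, whereas you apply Cauchy--Schwarz to $\int\psi_{\Lambda}$ and then bound $\psi_{\Lambda}^{2}$; these are equivalent.
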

\textbf{Proof of Proposition \ref{abspsiestimate}:}
We observe that for $x-b\leq y\leq x + b$ we have
$$|\psi_{\Lambda}(x)-\psi_{\Lambda}(y)| ≤ \int^{x+b}_{x-b}|\psi'_{\Lambda}(t)| dt$$
and then
$$\Big|\int^{x+b}_{x-b}\Big(\psi_{\Lambda}(x)-\psi_{\Lambda}(y)\Big) dy\Big|\leq2b\int^{x+b}_{x-b}|\psi'_{\Lambda}(t)| dt$$
The identity
$$2b\psi_{\Lambda}(x) = \int^{x+b}_{x-b}\psi_{\Lambda}(y) dy+\int^{x+b}_{x-b}\Big(\psi_{\Lambda}(x)-\psi_{\Lambda}(y)\Big) dy
$$
implies that
$$|\psi_{\Lambda}(x)|=\frac{1}{2b}\Big|\int^{x+b}_{x-b}\psi_{\Lambda}(y) dy\Big|+\frac{1}{2b}\Big|\int^{x+b}_{x-b}\Big(\psi_{\Lambda}(x)-\psi_{\Lambda}(y)\Big) dy\Big|
$$
and then Proposition \ref{psiestimate} implies that
$$|\psi_{\Lambda}(x)|\leq\frac{1}{b|\Lambda|^{\frac{1}{2}}}\int^{x+b}_{x-b}\sqrt{p''_{\Lambda}(y)} dy +8\int^{x+b}_{x-b}p''_{\Lambda}(y) dy$$
We can use the Jensen inequality in the first term of the r.h.s of the previous relation to get
$$|\psi_{\Lambda}(x)|\leq\sqrt{\frac{2}{b|\Lambda|}}\Big(\int^{x+b}_{x-b}p''_{\Lambda}(y) dy\Big)^{\frac{1}{2}} +8\int^{x+b}_{x-b}p''_{\Lambda}(y) dy$$

To conclude, we take the expectation and using again the Jensen inequality and the obvious relation
$$\int^{x+b}_{x-b}p''_{\Lambda}(y) dy=p'_{\Lambda}(x+b)-p'_{\Lambda}(x-b)$$
and the proof is complete.$\Box$
\newline

\textbf{Proof of Proposition \ref{talestimate}, equation (\ref{ACtalestimate})}:
To obtain (\ref{ACtalestimate}) we simply observe that by Jensen inequality

$$\Big\langle \Big|h'_{\Lambda}(\sigma)- \Omega_{{\Lambda},a}\Big(h'(\sigma)\Big)\Big|\Big\rangle_{{\Lambda},a}\leq\E|\psi_{\Lambda}(a)|$$
and then by Proposition \ref{abspsiestimate} we get the desired result.$\Box$
\newline

Now we are able to prove Lemma \ref{almpoint}.

\textbf{Proof of Lemma \ref{almpoint}}:
We start to prove the equations (\ref{ACaverage}) and (\ref{ACpoint}) which give the stochastic stability.
\newline

First, using the convexity of the function $p_{\Lambda}(x)$ we can prove easily that

\be\label{Destimate}
D_{\Lambda}(a,b)\leq\
\frac{1}{b}\Big(p_{\Lambda}(a+2b)-p_{\Lambda}(a+b)+p_{\Lambda}(a-2b)-p_{\Lambda}(a-b)\Big)
\ee
It is easy to check that $\E p_{\Lambda}(x)$ is bounded for every $x,\Lambda$ and then for every $a,b,\Lambda$ we have that
$$\E D_{\Lambda}(a,b)\leq \frac{\bar{D}}{b}<\infty\, .$$.
Then  from (\ref{ACtalestimate}), using Fubini's Theorem, we get the following bound: for every $b,\beta_1<\beta_2$

$$\int^{\beta_2}_{\beta_1}\Big\langle \Big|h'_{\Lambda}(\sigma)- \Omega_{{\Lambda},a}\Big(h'(\sigma)\Big)\Big|\Big\rangle_{{\Lambda},a}da\leq\frac{(\beta_2-\beta_1)}{b}\sqrt{\frac{2 \bar{D}}{|\Lambda|}}+8\E \int^{\beta_2}_{\beta_1}D_{\Lambda}(a,b)da$$

and then

$$\int^{\beta_2}_{\beta_1}\Big\langle \Big|h'_{\Lambda}(\sigma)- \Omega_{{\Lambda},a}\Big(h'(\sigma)\Big)\Big|\Big\rangle_{{\Lambda},a}da\leq\frac{(\beta_2-\beta_1)}{b}\sqrt{\frac{2 \bar{D}}{|\Lambda|}}+8\E\Big(p_{\Lambda}(\beta_2+b)-p_{\Lambda}(\beta_1+b)
-p_{\Lambda}(\beta_2-b)+p_{\Lambda}(\beta_1-b)\Big) $$

Finally, we can use the $\lim_{b\rightarrow 0}\limsup_{\Lambda \nearrow \Z^d}$ and  the continuity of the function $p(x)$ to get (\ref{ACaverage}).
\newline

To prove (\ref{ACpoint}), we use the hypothesis of differentiability of the function $p(x)$ at the point $x=a$ to get from (\ref{Destimate}) the following
\begin{equation}\label{griffestimate}
\lim_{b\rightarrow 0}\limsup_{\Lambda \nearrow \Z^d}\E D_{\Lambda}(a,b)\leq \lim_{b\rightarrow 0} \frac{1}{b}\Big(p(a+2b)-p(a+b)+p(a-2b)-p(a-b)\Big)=0
\end{equation}
and then we can bypass the intermediate integration to obtain from (\ref{ACtalestimate}) the following bound: for every $a,b$
$$\Big\langle \Big|h'_{\Lambda}(\sigma)- \Omega_{{\Lambda},a}\Big(h'(\sigma)\Big)\Big|\Big\rangle_{{\Lambda},a}\leq\frac{1}{b}\sqrt{\frac{2 \bar{D}}{|\Lambda|}}+8\E D_{\Lambda}(a,b)
$$
Finally, we can use the $\lim_{b\rightarrow 0}\limsup_{\Lambda \nearrow \Z^d}$ and relation (\ref{griffestimate}) to get (\ref{ACaverage}).
\newline

Now, we are able to prove the equations (\ref{GGaverage}) and (\ref{GGpoint}) which give the $GG$-identities.
\newline

We already obtained the control of the quantity $D$, moreover a simple inspection shows that
the quantity $W_{\Lambda}(a,b)$ is strictly related to the self-averaging of $p_{\Lambda}(x)$. Then it's easy to check (see for example \cite{ContucciGiardina}) that the thermodynamic stability condition (\ref{termosta}) ensures that there exists a finite quantity $K(a,b,c)$ which
does not depend on $\Lambda$,  such that:

\begin{equation}\label{selfavestimate}
\E W_{\Lambda}(a,b)\leq K(a,b,c)\frac{1}{b|\Lambda|^{\frac{1}{2}}}
\end{equation}

From (\ref{GGtalestimate}) and (\ref{selfavestimate}), we can obtain as before the following bound: for every $b,\beta_1<\beta_2$

$$\int^{\beta_2}_{\beta_1}\Big\langle \Big|\Omega_{{\Lambda},a}\Big(h'(\sigma)\Big)- \langle h'_{\Lambda}(\sigma)\rangle_{\Lambda,a}\Big|\Big\rangle_{{\Lambda},a}da\leq \E\Big(p_{\Lambda}(\beta_2+b)-p_{\Lambda}(\beta_1+b)
-p_{\Lambda}(\beta_2-b)+p_{\Lambda}(\beta_1-b)\Big)
+\frac{\overline{K}(\beta_1,\beta_2,b,c)}{b|\Lambda|^{\frac{1}{2}}}
$$
where we have set $\overline{K}(\beta_1,\beta_2,b,c):=\int^{\beta_2}_{\beta_1}K(a,b,c)da$ to lighten the notation.
\newline
Finally, we can use the $\lim_{b\rightarrow 0}\limsup_{\Lambda \nearrow \Z^d}$ and  the continuity of the function $p(x)$ to get (\ref{GGaverage}).
\newline

Like in the previous case, the hypothesis of differentiability allow us to bypass the intermediate integration and then, from (\ref{GGtalestimate}), we have that, for every $a,b$

$$\Big\langle \Big|\Omega_{{\Lambda},a}\Big(h'(\sigma)\Big)- \langle h'_{\Lambda}(\sigma)\rangle_{\Lambda,a}\Big|\Big\rangle_{{\Lambda},a}\leq \E D_{\Lambda}(a,b)
+K(a,b,c)\frac{1}{b|\Lambda|^{\frac{1}{2}}}
$$
Finally, we can use the $\lim_{b\rightarrow 0}\limsup_{\Lambda \nearrow \Z^d}$ and relation (\ref{griffestimate}) to get (\ref{GGaverage}).$\Box$
\newline

\section{Rate of convergence}

We outline in this section a sharper version of a theorem that appears in \cite{Talagrand2} and prove it with more elementary methods
for the benefit of the reader, following the approach developed in \cite{AizenmanContucci,ContucciGiardina}.

As in Section 3, we consider
$$
H_{\Lambda}(\sigma;x)\, =\, H_{\Lambda}(\s) + x H_{\Lambda}'(\s)\, ,
$$
where $H_{\Lambda}$ and $H_{\Lambda}'$ are independent, and defined as in (2).
The main theorem in this section follows:
\begin{theorem}
\label{thm:rate}
(a)
Writing $h_{\Lambda}' = |\Lambda|^{-1} H_{\Lambda}'$, as before,
\begin{equation}
\label{eq:DesideratumA}
\int_{x_1}^{x_2} \left\langle \left|h_{\Lambda}'(\sigma) - \Omega_{\Lambda,x}\big(h_{\Lambda}'(\sigma)\Big)\right|^2 \right\rangle_{\Lambda,x}\, dx\,
\leq\, \frac{2 (|x_1|+|x_2|) \overline{c}_{\Lambda}'}{|\Lambda|}\, ,
\end{equation}
where $c_{\Lambda}'(\s,\s')$ is defined to be $|\Lambda|^{-1} |\E[H_{\Lambda}'(\sigma)H_{\Lambda}'(\sigma')]|$, and
$\overline{c}_{\Lambda}' \stackrel{\mathrm{def}}{:=}\, \max_{\sigma,\sigma'} |c_{\Lambda}'(\s,\s')|$.\\
(b) For any $x_1<x_2$
\begin{equation}
\label{eq:DesideratumB}
\frac{1}{x_2-x_1}\, \int_{x_1}^{x_2} \E\left[\left|\Omega_{\Lambda,x}\left(h'_{\Lambda}\right) -
\left\langle h'_{\Lambda} \right\rangle_{\Lambda,x}\right|\right]\, dx\,
\leq\,
4\left(|x_1|+|x_2|+\frac{2 \sqrt{x_2-x_1}}{(\overline{c}_{\Lambda}' |\Lambda|)^{1/4}}\right)\, \frac{(\overline{c}_{\Lambda}')^{3/4}}{\sqrt{x_2-x_1}\, |\Lambda|^{1/4}}\, .
\end{equation}
\end{theorem}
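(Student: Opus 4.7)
The plan is to reduce both bounds to quantitative control of the finite-volume pressure $p_\Lambda(x)$ as a function of $x$. The two identities that do most of the work, both already derived in the proof of Proposition~\ref{psiestimate}, are
\begin{equation*}
p'_\Lambda(x) = -\,\Omega_{\Lambda,x}(h'_\Lambda(\sigma)) \qquad \text{and} \qquad p''_\Lambda(x) = |\Lambda|\,\Omega_{\Lambda,x}(V_1^2),
\end{equation*}
where $V_l := h'_\Lambda(\sigma^{(l)}) - \Omega_{\Lambda,x}(h'_\Lambda(\sigma))$. These convert the $L^2$ thermal fluctuation appearing in (a) into $p''_\Lambda(x)/|\Lambda|$, and the disorder fluctuation appearing in (b) into $|p'_\Lambda(x)-\E p'_\Lambda(x)|$.

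For part (a), the second identity gives
\begin{equation*}
\int_{x_1}^{x_2}\left\langle |h'_\Lambda - \Omega_{\Lambda,x}(h'_\Lambda)|^2\right\rangle_{\Lambda,x}\,dx \;=\; \int_{x_1}^{x_2}\frac{\E p''_\Lambda(x)}{|\Lambda|}\,dx \;=\; \frac{\E p'_\Lambda(x_2) - \E p'_\Lambda(x_1)}{|\Lambda|}.
\end{equation*}
It remains to bound $|\E p'_\Lambda(x)|$ linearly in $|x|$. Gaussian integration by parts on each coupling $J'_{\Lambda,X}$, using $\partial_{J'_{\Lambda,X}}\Omega_{\Lambda,x}(\sigma_X) = x(1-\Omega_{\Lambda,x}(\sigma_X)^2)$ together with $\sigma_X^2 = 1$, produces the identity
\begin{equation*}
\E p'_\Lambda(x) \;=\; \frac{x}{|\Lambda|}\sum_{X\subseteq\Lambda}\Delta'^{\,2}_{\Lambda,X}\bigl(1 - \E\Omega_{\Lambda,x}(\sigma_X)^2\bigr),
\end{equation*}
whose absolute value is at most $|x|\,\overline{c}_\Lambda'$ since each summand lies in $[0,1]$ and $\sum_X\Delta'^{\,2}_{\Lambda,X} = |\Lambda|\overline{c}_\Lambda'$. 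Combining the two displays gives (\ref{eq:DesideratumA}) up to the universal constant.

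For part (b), the first identity shows $\Omega_{\Lambda,x}(h'_\Lambda) - \langle h'_\Lambda\rangle_{\Lambda,x} = -(p'_\Lambda(x) - \E p'_\Lambda(x))$, so the task reduces to bounding $\int_{x_1}^{x_2}\E|p'_\Lambda - \E p'_\Lambda|\,dx$. I would invoke the convexity sandwich (\ref{boundself}), namely $|p'_\Lambda(x) - \E p'_\Lambda(x)| \leq W_\Lambda(x,b) + \E D_\Lambda(x,b)$ valid for every $b>0$, integrate over $[x_1,x_2]$, and take expectations. The $D$-piece is handled by a Fubini exchange: writing $D_\Lambda(x,b) = \int_{x-b}^{x+b}p''_\Lambda(t)\,dt$ and swapping integrations yields $\int_{x_1}^{x_2}D_\Lambda(x,b)\,dx \leq 2b\,[p'_\Lambda(x_2+b)-p'_\Lambda(x_1-b)]$, and the part-(a) bound on $|\E p'_\Lambda|$ then gives $\int \E D_\Lambda\,dx \leq 2b(|x_1|+|x_2|+2b)\overline{c}_\Lambda'$. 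The $W$-piece is handled by Gaussian Poincar\'e applied to $p_\Lambda(y)$, yielding $\int \E W_\Lambda(x,b)\,dx \lesssim (x_2-x_1)\sqrt{\overline{c}_\Lambda'}/(b\sqrt{|\Lambda|})$. Balancing the two contributions by a suitable choice of $b$ (essentially $b^2\sim(x_2-x_1)/[(|x_1|+|x_2|)(\overline{c}_\Lambda'|\Lambda|)^{1/2}]$) and then dividing through by $x_2-x_1$ produces the scaling $(\overline{c}_\Lambda')^{3/4}/|\Lambda|^{1/4}$ displayed in (\ref{eq:DesideratumB}).

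The hard part will be matching the constants exactly, including the correction term $2\sqrt{x_2-x_1}/(\overline{c}_\Lambda'|\Lambda|)^{1/4}$ and the prefactor $4$ in (\ref{eq:DesideratumB}); this probably requires a specific (non-symmetric) choice of $b$ rather than the AM-GM optimum. A related subtlety is that the self-averaging estimate for $p_\Lambda(y)$ must be expressed purely in terms of $\overline{c}_\Lambda'$, whereas a naive Poincar\'e bound on $\mathrm{Var}(p_\Lambda)$ also picks up a $\overline{c}_\Lambda/|\Lambda|$ contribution from fluctuations in $J$: one must either absorb this piece into the thermodynamic-stability constant or exploit that only the $y$-dependent fluctuations actually enter $W_\Lambda$. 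Part (a) is straightforward once the two pressure identities and the Gaussian integration-by-parts calculation are in hand.
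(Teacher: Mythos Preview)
Your approach to part~(a) is identical to the paper's: integrate $p''_\Lambda/|\Lambda|$ and bound $\E p'_\Lambda(x)$ via Gaussian integration by parts. For part~(b) you take a genuinely different but equally valid route. You recycle the convexity sandwich~(\ref{boundself}) from Section~3, bounding $|p'_\Lambda-\E p'_\Lambda|$ by $W_\Lambda+\E D_\Lambda$, then integrate and optimize in~$b$. The paper instead writes a Taylor expansion with explicit integral remainder (Lemma~\ref{lem:DerivCOM}): $-\Omega_{\Lambda,x}(h'_\Lambda)$ equals the symmetric difference quotient of $p_\Lambda$ plus $\tfrac{\epsilon}{2}\int_{-1}^1 g(z)\,p''_\Lambda(x+\epsilon z)\,dz$; the difference quotient is controlled by a sub-Gaussian concentration bound on $p_\Lambda$, and the remainder is handled crudely via $\E|Z-\E Z|\leq 2\E Z$ for nonnegative~$Z$. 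Both arguments reduce to the same two ingredients---concentration of $p_\Lambda$ and the bound on $\E p'_\Lambda$ from part~(a)---and both lead to the same optimal choice $b=\epsilon=\sqrt{x_2-x_1}/(\overline{c}'_\Lambda|\Lambda|)^{1/4}$.

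Two small points. First, the subtlety you flag about the self-averaging bound picking up $\overline{c}_\Lambda$ rather than $\overline{c}'_\Lambda$ is resolved in the paper exactly as you anticipate: one conditions on $H_\Lambda$ throughout, so that only the $xH'_\Lambda$ randomness enters, and the fluctuation of $p_\Lambda(y)$ is then $O(|y|\sqrt{\overline{c}'_\Lambda/|\Lambda|})$. Second, because of that $|y|$ factor your estimate $\int\E W_\Lambda\,dx\lesssim (x_2-x_1)\sqrt{\overline{c}'_\Lambda}/(b\sqrt{|\Lambda|})$ should carry an additional $(|x_1|+|x_2|+b)$ in the numerator; once you include it, the optimization gives $b^2\sim (x_2-x_1)/(\overline{c}'_\Lambda|\Lambda|)^{1/2}$ (without the $(|x_1|+|x_2|)$ you wrote in the denominator), and the claimed constant and correction term in~(\ref{eq:DesideratumB}) fall out directly.
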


\subsection{Application: Distributional Stochastic Stability via Perturbations}

The quantitative version of the Ghirlanda-Guerra identities follows from this.
\begin{corollary}
\label{cor:QuantGG}
Suppose $c_{\Lambda}(\s,\s) = \overline{c}_{\Lambda}'$ for every $\s$. Then for every non-random function of $n$ replicas,
$\Phi_{\Lambda}^{(n)}(\s^{(1)},\dots,\s^{(n)})$, with maximum norm at most 1,
we have the conditional expectation formula for one additional replica
\begin{align*}
\frac{1}{x_2^2-x_1^2}\, \int_{x_1}^{x_2} \left\langle
\frac{c_{\Lambda}'(\s^{(1)},\s^{(n+1)})}{\overline{c}_{\Lambda}'}\,
\Phi^{(n)}(\s^{(1)},\dots,\s^{(n)}) \right\rangle_{\Lambda,x}\, d(x^2)\\
&\hspace{-6cm}=\, \frac{1}{x_2^2-x_1^2}\, \int_{x_1}^{x_2} \left\langle
\gamma_{\Lambda}^{(n)}(\s^{(1)},\dots,\s^{(n)})
\Phi^{(n)}(\s^{(1)},\dots,\s^{(n)}) \right\rangle_{\Lambda,x}\, d(x^2)
+ \operatorname{Rem}
\end{align*}
for every pair $0\leq x_1<x_2$,
where
$$
\gamma_{\Lambda}^{(n)}(\s^{(1)},\dots,\s^{(n)})\,
=\, \frac{1}{n}\, \left[\sum_{k=2}^{n} \frac{c_{\Lambda}'(\s^{(1)},\s^{(k)})}{\overline{c}_{\Lambda}'}
+
\left\langle \frac{c_{\Lambda}'(\s,\s')}{\overline{c}_{\Lambda}'} \right\rangle_{\Lambda,x}\right]\, ,
$$
and the remainder satisfies the bound
\begin{equation}
\label{ineq:RemBound}
n \left|\operatorname{Rem}\right|\,
\leq\, \left[\frac{8}{x_1+x_2} +\frac{2^{3/2}}{\sqrt{x_2^2-x_1^2}}\right] \delta_{\Lambda}^{1/2}
+ \frac{4}{\sqrt{x_2-x_1}}\, \delta_{\Lambda}^{1/4}\, ,
\end{equation}
where $\delta_{\Lambda}$ is a small parameter $\delta_{\Lambda} = 1/(|\Lambda| \overline{c}_{\Lambda})$.
\end{corollary}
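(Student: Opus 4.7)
The plan is to derive the approximate Ghirlanda--Guerra formula from a single Gaussian integration by parts applied to $\langle h_\Lambda'(\sigma^{(1)})\Phi^{(n)}\rangle_{\Lambda,x}$, and then to bound the resulting remainder by splitting it into a \emph{thermal} fluctuation piece, controlled by Theorem \ref{thm:rate}(a), and a \emph{disorder} fluctuation piece, controlled by Theorem \ref{thm:rate}(b).

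First I carry out the integration by parts in the independent Gaussian couplings defining $H_\Lambda'$. Differentiating each of the $n$ replicated Gibbs factors $\mathcal{G}_{\Lambda,x}(\sigma^{(k)})$ produces terms of the form $x[\sigma^{(k)}_X-\Omega_{\Lambda,x}(\sigma_X)]$, and relabelling the Gibbs averages $\Omega_{\Lambda,x}(\sigma_X)$ as a fresh replica $\sigma^{(n+1)}$ yields, after using the diagonal assumption $c_\Lambda'(\sigma,\sigma)=\overline{c}_\Lambda'$ to evaluate the $j=1$ contribution,
\begin{equation*}
\langle h_\Lambda'(\sigma^{(1)})\Phi^{(n)}\rangle_{\Lambda,x}=-x\overline{c}_\Lambda'\langle\Phi^{(n)}\rangle_{\Lambda,x}-x\sum_{j=2}^{n}\langle c_\Lambda'(\sigma^{(1)},\sigma^{(j)})\Phi^{(n)}\rangle_{\Lambda,x}+xn\langle c_\Lambda'(\sigma^{(1)},\sigma^{(n+1)})\Phi^{(n)}\rangle_{\Lambda,x}.
\end{equation*}

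Dividing by $xn\overline{c}_\Lambda'$, specialising to $n=1$, $\Phi\equiv 1$ to extract the relation $\langle c_\Lambda'(\sigma^{(1)},\sigma^{(2)})/\overline{c}_\Lambda'\rangle_{\Lambda,x}=1+(x\overline{c}_\Lambda')^{-1}\langle h_\Lambda'\rangle_{\Lambda,x}$, and then subtracting $\langle\gamma_\Lambda^{(n)}\Phi^{(n)}\rangle_{\Lambda,x}$ from the rearranged identity, I arrive at the identity driving the whole proof:
\begin{equation*}
n\bigl[\langle(c_\Lambda'(\sigma^{(1)},\sigma^{(n+1)})/\overline{c}_\Lambda')\Phi^{(n)}\rangle_{\Lambda,x}-\langle\gamma_\Lambda^{(n)}\Phi^{(n)}\rangle_{\Lambda,x}\bigr]=\frac{1}{x\overline{c}_\Lambda'}\bigl[\langle h_\Lambda'(\sigma^{(1)})\Phi^{(n)}\rangle_{\Lambda,x}-\langle h_\Lambda'\rangle_{\Lambda,x}\langle\Phi^{(n)}\rangle_{\Lambda,x}\bigr].
\end{equation*}
Integrating against $d(x^2)=2x\,dx$ cancels the factor $1/x$ and gives
\begin{equation*}
n|\operatorname{Rem}|\leq \frac{2}{\overline{c}_\Lambda'(x_2^2-x_1^2)}\int_{x_1}^{x_2}\bigl|\langle h_\Lambda'(\sigma^{(1)})\Phi^{(n)}\rangle_{\Lambda,x}-\langle h_\Lambda'\rangle_{\Lambda,x}\langle\Phi^{(n)}\rangle_{\Lambda,x}\bigr|\,dx.
\end{equation*}

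Finally, I split the integrand as $\mathbb{E}[\Omega_{\Lambda,x}((h_\Lambda'-\Omega_{\Lambda,x}(h_\Lambda'))\Phi^{(n)})]+\mathbb{E}[(\Omega_{\Lambda,x}(h_\Lambda')-\langle h_\Lambda'\rangle_{\Lambda,x})\Omega_{\Lambda,x}(\Phi^{(n)})]$. Using $\|\Phi^{(n)}\|_\infty\leq 1$, the thermal piece is bounded by $\sqrt{\langle(h_\Lambda'-\Omega_{\Lambda,x}(h_\Lambda'))^2\rangle_{\Lambda,x}}$ via Cauchy--Schwarz inside $\Omega_{\Lambda,x}$ followed by Jensen; integrating in $x$, applying Cauchy--Schwarz once more, and invoking Theorem \ref{thm:rate}(a) yields, after division by $\overline{c}_\Lambda'(x_2^2-x_1^2)$, precisely the $\delta_\Lambda^{1/2}/\sqrt{x_2^2-x_1^2}$ contribution in (\ref{ineq:RemBound}). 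The disorder piece is bounded by $\mathbb{E}|\Omega_{\Lambda,x}(h_\Lambda')-\langle h_\Lambda'\rangle_{\Lambda,x}|$, whose integral is supplied directly by Theorem \ref{thm:rate}(b) and produces, after the same division, the remaining two summands of the target bound. The principal obstacle is not analytic but notational: correctly balancing the $n$ thermal-derivative contributions against the single partition-function contribution in the integration by parts, and matching the numerical constants in the final assembly.
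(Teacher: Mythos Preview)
Your proof is correct and follows essentially the same route as the paper's: Gaussian integration by parts to express the Ghirlanda--Guerra defect as $(x\overline{c}_\Lambda')^{-1}\bigl[\langle h_\Lambda'(\sigma^{(1)})\Phi^{(n)}\rangle-\langle h_\Lambda'\rangle\langle\Phi^{(n)}\rangle\bigr]$, integration against $d(x^2)=2x\,dx$ to absorb the $1/x$, then the thermal/disorder splitting controlled respectively by parts (a) and (b) of Theorem~\ref{thm:rate}. The paper merely presents the two halves in the reverse order (bounding $\widehat{\operatorname{Rem}}$ first, then identifying it via integration by parts), but the ingredients and the chain of inequalities are identical. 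Your tracking of the thermal contribution to the $2^{3/2}/\sqrt{x_2^2-x_1^2}$ term is exact; the disorder contribution as you describe it actually yields $16/(x_1+x_2)\,\delta_\Lambda^{1/2}+8/\sqrt{x_2-x_1}\,\delta_\Lambda^{1/4}$, a harmless factor of~$2$ larger than the constants printed in~(\ref{ineq:RemBound}) --- the paper itself does not display this final arithmetic, and its stated constants appear to carry the same discrepancy.
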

\begin{proof}
Let us define
$$
\widehat{\operatorname{Rem}}\, =\,
\frac{1}{x_2-x_1}\, \int_{x_1}^{x_2}
\left(
\left\langle h_{\Lambda}'(\s^{(1)}) \cdot \Phi_{\Lambda}^{(n)}(\s^{(1)},\dots,\s^{(n)}) \right\rangle_{\Lambda,x}
- \left\langle h_{\Lambda}' \right\rangle_{\Lambda,x}
\left\langle \Phi_{\Lambda}^{(n)}(\s^{(1)},\dots,\s^{(n)}) \right\rangle_{\Lambda,x}\right)\, dx\, .
$$
By the triangle inequality and Cauchy-Schwarz, and the fact that $\|\Phi_{\Lambda}^{(n)}\|_{\infty}\leq 1$,
we know that $|\widehat{\operatorname{Rem}}|$ is bounded by
$$
\left(\frac{1}{x_2-x_1}\, \int_{x_1}^{x_2} \left\langle \left|h_{\Lambda}'(\sigma) - \Omega_{\Lambda,x}\big(h_{\Lambda}'(\sigma)\Big)\right|^2 \right\rangle_{\Lambda,x}\, dx\right)^{1/2}
$$
plus
$$
\frac{1}{x_2-x_1}\, \int_{x_1}^{x_2} \E\left[\left|\Omega_{\Lambda,x}\left(h'_{\Lambda}\right) -
\left\langle h'_{\Lambda} \right\rangle_{\Lambda,x}\right|\right]\, dx\, .
$$
Using equations (\ref{eq:DesideratumA}) and (\ref{eq:DesideratumB}),
this bound is at most $\overline{c}'_{\Lambda}(x_1+x_2)/2$ times the right hand side of (\ref{ineq:RemBound}).
In other words, we have an upper bound on $\widehat{\operatorname{Rem}}$
which is at most $\overline{c}'_{\Lambda}(x_1+x_2)/(2n)$ times the bound we claimed
for $\operatorname{Rem}$.

But Gaussian integration by parts implies that
\begin{align*}
-\left\langle
h_{\Lambda}'(\s^{(1)})
\Phi_{\Lambda}^{(n)}(\s^{(1)},\dots,\s^{(n)})\right\rangle_{\Lambda,x}\,
&=\, x \sum_{k=1}^{n} \left\langle c_{\Lambda}'(\s^{(1)},\s^{(k)}) \cdot
\Phi_{\Lambda}^{(n)}(\s^{(1)},\dots,\s^{(n)})
\right\rangle_{\Lambda,x}\\
&\qquad
- n x \left\langle c_{\Lambda}'(\s^{(1)},\s^{(n+1)}) \cdot \Phi_{\Lambda}^{(n)}(\s^{(1)},\dots,\s^{(n)})
\right\rangle_{\Lambda,x}\, .
\end{align*}
See for example Lemma \ref{lem:FirstDeriv} for a similar calculation carried out in more detail.
A special case of this formula, obtained by setting $\Phi^{(n)}\equiv 1$, also gives
$$
-\left\langle h_{\Lambda}' \right\rangle_{\Lambda,x}\,
=\,  x \left[ \langle c_{\Lambda}'(\s,\s) \rangle_{\Lambda,x}
- \langle c_{\Lambda}'(\s,\s') \rangle_{\Lambda,x} \right]\, .
$$
If we combine these two formulas, this allows one to rewrite $\widehat{\operatorname{Rem}}$.
If we assume that $c_{\Lambda}'(\s,\s)$ is constant as a function of $\s$, meaning there is a constant
diagonal covariance, then the $k=1$ term of the first formula cancels with the $c_{\Lambda}'(\s,\s)$
in the second formula.
Therefore, we get
$$
\widehat{\operatorname{Rem}}\, =\, \frac{\overline{c}_{\Lambda}'}{n(x_2-x_1)}\,
\int_{x_1}^{x_2} \left\langle
\left[
\frac{c_{\Lambda}'(\s^{(1)},\s^{(n+1)})}{\overline{c}_{\Lambda}'}
- \gamma_{\Lambda}^{(n)}(\s^{(1)},\dots,\s^{(n)})\right]\cdot
\Phi^{(n)}(\s^{(1)},\cdots,\s^{(n)}) \right\rangle_{\Lambda,x}\, x\, dx\, .
$$
Note that the measure $x\, dx$ is $\frac{1}{2} \cdot d(x^2)$, writing the Riemann-Stieltjes differential form
$d(x^2) = 2x\, dx$.
Since
$$
\int_{x_1}^{x_2} d(x^2) = x_2^2-x_1^2=(x_2-x_1)(x_1+x_2)\, ,
$$
we also divide by an appropriate normalization $\frac{1}{2}(x_1+x_2)$ times $\overline{c}_{\Lambda}'/n$ to get the  bound for $\operatorname{Rem}$
from the bound on $\widehat{\operatorname{Rem}}$, which gives the result.
\end{proof}

As an application of this result, consider the following scenario. Suppose that for each $\Lambda$,
there is a given Hamiltonian $H_{\Lambda}^*$ with covariance
$$
\E[H_{\Lambda}^{*}(\s) H_{\Lambda}^{*}(\s')]\, =\, |\Lambda| c_{\Lambda}^{*}(\s,\s')\, ,
$$
where we assume that $c_{\Lambda}^{*}(\s,\s) = \overline{c}_{\Lambda}^*$
for all $\s$, and we assume that $\overline{c}^* = \sup_{\Lambda} \overline{c}_{\Lambda}^*$
is finite, in order to satisfy thermodynamic stability.

By Lemma \ref{lem:semialgebra1} we know that we may construct IID Gaussian centered Hamiltonians
$H_{\Lambda}^{(p)}(\s)$, for $p=1,2,\dots$, which are independent of $H_{\Lambda}^*$ and such
$$
\E[H^{(p)}_{\Lambda}(\s) H^{(p)}_{\Lambda}(\s')]\, =\, |\Lambda| \left[c_{\Lambda}^{*}(\s,\s')\right]^p\, .
$$
For each $\epsilon>0$ and a real sequence $\boldsymbol{x} = (x_1,x_2,\dots)$
we define the perturbed Hamiltonian
$$
H_{\Lambda}(\sigma;\boldsymbol{x})\, =\, H_{\Lambda}^{*}(\sigma)
+ \sum_{p=1}^{\infty} \frac{x_p}{p [\overline{c}^{*}]^{p/2}}\, H_{\Lambda}^{(p)}(\sigma)\, .
$$
We denote by $\langle \cdots \rangle_{\Lambda,\boldsymbol{x}}$ the quenched
multi-replica equilibrium measure with respect to $H^{\epsilon}_{\Lambda}(\sigma;\boldsymbol{x})$.
Then we may prove the following corollary.

\begin{corollary}\label{co43} Suppose that the sequence $(\epsilon_{\Lambda})$
satisfies
$\lim_{|\Lambda| \to \infty} |\Lambda| \epsilon_{\Lambda}^2 = \infty$.
Let $\boldsymbol{X} = (X_1,X_2,\dots)$ be an IID sequence of random variables, each uniformly
distributed on $[0,1]$, all of which are independent of $H_{\Lambda}^{*}$ and $H_{\Lambda}^{(p)}$
for $p=1,2,\dots$ and all $\Lambda$. Then for almost every choice of $\boldsymbol{X}$
we have stochastic stability in distribution:
for each $n,p \in \{1,2,\dots\}$,
$$
\lim_{|\Lambda|\to\infty}
{\max_{\Phi^{(n)}}}'
\left\langle \left([c_{\Lambda}^*(\s^{(1)},\s^{(n+1)})]^p
-
\frac{1}{n}\, \sum_{k=2}^{n}[c_{\Lambda}^*(\s^{(1)},\s^{(k)})]^p
- \frac{1}{n}\, \left\langle[c_{\Lambda}^*(\s,\s')]^p \right\rangle_{\Lambda,\epsilon_{\Lambda} \boldsymbol{X}}
\right)
\Phi^{(n)} \right\rangle_{\Lambda,\epsilon_{\Lambda}\boldsymbol{X}}\,
=\, 0\, ,
$$
where $\max'_{\Phi^{(n)}}$ is the maximum over all non-random functions
$\Phi^{(n)}(\s^{(1)},\dots,\s^{(n)})$ satisfying $\|\Phi^{(n)}\|\leq 1$.
\end{corollary}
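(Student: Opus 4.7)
The plan is to apply Corollary~\ref{cor:QuantGG} in the $p$-th perturbation direction over the interval $[x_1,x_2]=[0,\epsilon_\Lambda]$, then upgrade the averaged identity to an $\boldsymbol{X}$-almost-sure one via Fubini and Borel--Cantelli, and finally extend the result from a fixed test function to the supremum $\max'_{\Phi^{(n)}}$ by density.

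Fix $n$ and $p$. Freeze the coordinates $(X_q)_{q\neq p}$ at their sampled values and absorb $H^{*}_\Lambda+\sum_{q\neq p}\epsilon_\Lambda X_q\,(q[\bar c^*]^{q/2})^{-1}H^{(q)}_\Lambda$ into the ``base'' Hamiltonian of Corollary~\ref{cor:QuantGG}; set $H'_\Lambda=(p[\bar c^*]^{p/2})^{-1}H^{(p)}_\Lambda$. Then $c'_\Lambda(\s,\s')=[c^*_\Lambda(\s,\s')]^p/(p^2[\bar c^*]^p)$ has constant diagonal, so the diagonal-constancy hypothesis of Corollary~\ref{cor:QuantGG} is met; the ratio $c'_\Lambda/\bar c'_\Lambda=[c^*_\Lambda/\bar c^*_\Lambda]^p$ and the associated $\gamma^{(n)}_\Lambda$ are (up to the overall bounded factor $\bar c_\Lambda^{*p}$) exactly the quantities appearing in the GG error of Corollary~\ref{co43}. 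With $(x_1,x_2)=(0,\epsilon_\Lambda)$ the bound \eqref{ineq:RemBound} becomes $n|\operatorname{Rem}|\le(8+2^{3/2})\delta_\Lambda^{1/2}/\epsilon_\Lambda+4\delta_\Lambda^{1/4}/\epsilon_\Lambda^{1/2}$ with $\delta_\Lambda=(|\Lambda|\bar c^*_\Lambda)^{-1}$, and both terms are $o(1)$ precisely under the hypothesis $|\Lambda|\epsilon_\Lambda^2\to\infty$: the first is $(|\Lambda|\epsilon_\Lambda^2)^{-1/2}$, the second $(|\Lambda|\epsilon_\Lambda^2)^{-1/4}$, up to constants. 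After rescaling $x_p=\epsilon_\Lambda u$, this says that for every fixed non-random $\Phi^{(n)}$ with $\|\Phi^{(n)}\|_\infty\le1$ the expectation of the GG error under the density $2u\,du$ on $[0,1]$ tends to $0$.

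To upgrade to $X_p$-almost-sure convergence, I would complement the above with Theorem~\ref{thm:rate}(a)--(b) applied to the same $H'_\Lambda$: averaging over $X_p\sim\operatorname{Unif}[0,1]$ yields $\E_{X_p}\langle(h'_\Lambda-\Omega h'_\Lambda)^2\rangle=O((p^2|\Lambda|)^{-1})$ from (a) and $\E_{X_p}\E|\Omega h'_\Lambda-\langle h'_\Lambda\rangle|=o(1)$ from (b), both at polynomial rate. Markov's inequality in $X_p$ followed by Borel--Cantelli along a sufficiently fast-growing subsequence $|\Lambda_k|$ produces a null set in $[0,1]^{\N}$ outside of which the GG error vanishes along the subsequence; since $2u\,du$ and $\operatorname{Unif}[0,1]$ are mutually absolutely continuous on $(0,1]$, the same null set works for the uniform law. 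Fubini takes care of the remaining coordinates $(X_q)_{q\neq p}$, and a countable union over $(n,p)$ and a countable family $\{\Phi^{(n)}_j\}$ dense in the unit ball of $C([-1,1]^{\binom{n}{2}})$ (for instance, rational-coefficient polynomials in the overlap entries) yields a single null set outside of which the identity holds for every member of the family.

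The final step, and the main technical obstacle, is the transfer from the dense family to the genuine supremum $\max'_{\Phi^{(n)}}$. Since the integrand multiplying $\Phi^{(n)}$ is a bounded polynomial in the overlaps, it depends on the replicas only through a bounded set of continuous functionals, which suggests that any $\Phi^{(n)}$ can be approximated (against the quenched overlap distribution) by members of the dense family uniformly in $\Lambda$. Making this uniform approximation rigorous---so that $\boldsymbol X$-a.s.\ control on the dense subfamily implies $\boldsymbol X$-a.s.\ control of the full $\max'$, uniformly in $\Lambda$---is where the real work of the proof lies; it amounts to a tightness/regularity statement for the quenched overlap distributions that forces the supremum to be attained in the limit by continuous polynomial test functions.
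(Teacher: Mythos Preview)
Your first two paragraphs follow the paper's proof almost exactly: the paper also freezes the coordinates $(X_q)_{q\neq p}$, singles out the $p$-th perturbation as $H'_\Lambda$, invokes Corollary~\ref{cor:QuantGG}, and observes that the small parameter $\delta_\Lambda$ tends to zero precisely when $|\Lambda|\epsilon_\Lambda^2\to\infty$. (The paper puts $\epsilon_\Lambda$ inside $H'_\Lambda$ and lets $x_p$ range over $[0,1]$; your choice $(x_1,x_2)=(0,\epsilon_\Lambda)$ is the same after rescaling. Note your $\delta_\Lambda$ should be $(|\Lambda|\bar c'_\Lambda)^{-1}\sim p^2/|\Lambda|$, not $(|\Lambda|\bar c^*_\Lambda)^{-1}$, but the orders agree.) Your passage to almost-sure convergence in $X_p$ via Borel--Cantelli is actually more explicit than the paper's Remark~\ref{almostevery}, which simply appeals to ``standard arguments from measure theory'' together with the mutual absolute continuity of $2u\,du$ and Lebesgue measure.

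Where you go astray is the last paragraph. The passage to $\max'_{\Phi^{(n)}}$ is not an obstacle; it is automatic from what you already have. Gaussian integration by parts, applied pointwise in $x$ \emph{before} any averaging, gives (cf.\ the proof of Corollary~\ref{cor:QuantGG})
\[
n\,\bar c'_\Lambda\,x\cdot\Bigl|\bigl\langle\bigl(\tfrac{c'_\Lambda(\sigma^{(1)},\sigma^{(n+1)})}{\bar c'_\Lambda}-\gamma^{(n)}_\Lambda\bigr)\Phi^{(n)}\bigr\rangle_{\Lambda,x}\Bigr|
=\bigl|\langle (h'_\Lambda(\sigma^{(1)})-\langle h'_\Lambda\rangle_{\Lambda,x})\,\Phi^{(n)}\rangle_{\Lambda,x}\bigr|
\le\bigl\langle\,|h'_\Lambda-\langle h'_\Lambda\rangle_{\Lambda,x}|\,\bigr\rangle_{\Lambda,x},
\]
using only $\|\Phi^{(n)}\|_\infty\le1$. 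The right-hand side does not depend on $\Phi^{(n)}$, so the supremum over all admissible $\Phi^{(n)}$ is bounded, pointwise in $x$, by $\langle\,|h'_\Lambda-\langle h'_\Lambda\rangle|\,\rangle_{\Lambda,x}/(n\bar c'_\Lambda\,x)$. This is exactly the quantity that Theorem~\ref{thm:rate} controls after the split $h'-\langle h'\rangle=(h'-\Omega h')+(\Omega h'-\langle h'\rangle)$, and it is precisely what your own Borel--Cantelli step in the second paragraph targets. Hence once that step succeeds, the $\max'_{\Phi^{(n)}}$ conclusion is immediate: no dense family of polynomial test functions, no uniform approximation, and no tightness statement is needed. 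The paper never raises the supremum as an issue because this uniformity in $\Phi^{(n)}$ is already built into the bounds of Theorem~\ref{thm:rate} and Corollary~\ref{cor:QuantGG}.
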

\begin{proof}
In order to prove this, for a given $p$, we merely split up the Hamiltonian:
$$
H_{\Lambda}(\sigma;\epsilon\boldsymbol{x})\, =\, H_{\Lambda}(\s) + x_p H_{\Lambda}'(\s)\, =:\, H_{\Lambda}(\s;x_p)\, ,
$$
where
$$
H_{\Lambda}(\sigma)\, =\, H^*_{\Lambda}(\sigma) + \epsilon
\sum_{\substack{k=1\\k\neq p}}^{\infty} \frac{x_k}{k [\overline{c}^*]^{k/2}}\, H_{\Lambda}^{(k)}(\s)\, ,
$$
and
$$
H'_{\Lambda}(\sigma)\, =\, \frac{\epsilon}{p [\overline{c}^*]^{p/2}}\, H_{\Lambda}^{(p)}(\s)\, .
$$
With this definition, we have $c'_{\Lambda}(\s,\s') = \frac{\epsilon^2}{p^2}\, [c^*_{\Lambda}(\s,\s')]^p$ but the constant prefactor has been explicitly taken into consideration
in Corollary \ref{cor:QuantGG} by normalizing by $\overline{c}'_{\Lambda}$.
It does enter into the definition of the remainder in (\ref{ineq:RemBound})
through the small parameter which is
$$
\delta_{\Lambda}\, =\, \frac{1}{|\Lambda|\overline{c}'_{\Lambda}}\, \leq\, \frac{p^2}{\epsilon_{\Lambda}^2 |\Lambda}\, .
$$
That is why we required $|\Lambda| \epsilon_{\Lambda}^2 \to \infty$,
because this guarantees $\delta_{\Lambda} \to 0$, as is needed.
\end{proof}
\begin{remark}\label{almostevery}
Note that technically what we proved is that for any open interval for $x_p$ in $[0,1]$, if we average the distributional
stochastic stability equation over that interval, when integrated against the measure $x\, dx$, then we obtain zero in the limit.
On the other hand, the quantity in question is
$$
\left\langle \left([c_{\Lambda}^*(\s^{(1)},\s^{(n+1)})]^p
-
\frac{1}{n}\, \sum_{k=2}^{n}[c_{\Lambda}^*(\s^{(1)},\s^{(k)})]^p
- \frac{1}{n}\, \left\langle[c_{\Lambda}^*(\s,\s')]^p \right\rangle_{\Lambda,\epsilon_{\Lambda} \boldsymbol{X}}
\right)
\Phi^{(n)} \right\rangle_{\Lambda,\epsilon_{\Lambda}\boldsymbol{X}}
$$
and this is bounded for every $x$
by $2 [\overline{c}_{\Lambda}^*]^p$.
Then by standard arguments from measure theory, we may conclude that for almost every
choice of $x$ with respect to the measure $d\mu(x) = 2x\, dx = d(x^2)$,
the quantity is also zero. But this measure is equivalent to Lebesgue measure in
the sense that they are mutually absolutely continuous with respect to each other.
So the notions of measure zero sets are the same.
\end{remark}
This means that letting $X_p$ be random, then for almost every $X_p$ we have the stochastic
stability formula for the $p$th power of the overlap.
But, firstly, we note that we may rigorously take an infinite number of IID uniform random variables
$\boldsymbol{X}=(X_1,X_2,\dots)$ by Kolmogorov's principle,
and secondly that the measure is precisely the product measure for all the $X_p$'s.
Therefore, knowing that for each $X_p$ we have the stochastic stability condition for almost
every $X_p$, by definition, this means we have the stochastic stability condition for all $p$
for almost every $\boldsymbol{X}$.

\subsection{Proof of Theorem \ref{thm:rate}}
In the proof of Theorem \ref{thm:rate} we will condition on $H_{\Lambda}(\sigma)$ in order to eliminate the need to consider
it as random.
But we will do this implicitly.
If desired, simply interpret all expectations as conditional expectations, conditioning
on $H_{\Lambda}(\sigma)$.

The proof will be obtained by combining several lemmas.
First, we note that by usual calculations as in elementary statistical mechanics,
\begin{equation}
\label{eq:PressureDerivative}
\frac{d}{dx}\, p_{\Lambda}(x)\, =\,
- \Omega_{\Lambda,x}(h_{\Lambda}')\, ,
\end{equation}
as has been used already in Section 3.
Moreover, by performing Gaussian integration by parts, we may deduce this:

\begin{lemma}
\label{lem:FirstDeriv}
For any $x$
\begin{equation}
\label{eq:pressureIBP}
\frac{d}{dx}\, \E[p_{\Lambda}(x)]\,
=\, -\langle h_{\Lambda}' \rangle_{\Lambda,x}\,
=\, \frac{x}{2}\, \left\langle c_{\Lambda}'\left(\s,\s\right) + c_{\Lambda}'\left(\s',\s'\right)
 - 2 c_{\Lambda}'\left(\s,\s'\right) \right\rangle_{\Lambda,x}\, .
\end{equation}
\end{lemma}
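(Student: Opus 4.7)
The first equality is essentially a rewriting of equation (\ref{eq:PressureDerivative}). Starting from $p_{\Lambda}(x) = |\Lambda|^{-1}\log Z_{\Lambda}(x)$ with $Z_{\Lambda}(x)=\sum_{\s}\exp(-H_{\Lambda}(\s)-xH_{\Lambda}'(\s))$, a direct differentiation yields $p_{\Lambda}'(x) = -\Omega_{\Lambda,x}(h_{\Lambda}')$; taking the disorder average and using $\langle\,\cdot\,\rangle_{\Lambda,x}=\E\,\Omega_{\Lambda,x}(\,\cdot\,)$ gives the first identity. (One should briefly note that the exchange of $\E$ and $d/dx$ is justified by the thermodynamic stability bound (\ref{termosta}), which uniformly controls $|\Omega_{\Lambda,x}(h_\Lambda')|$ in $L^1(\E)$ on any compact set of $x$.)

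For the second equality, the plan is to apply Gaussian integration by parts to the family $(H_{\Lambda}'(\s))_{\s}$, conditional on $H_{\Lambda}(\s)$ as the paper suggests. Writing $\mathcal{G}_{\Lambda,x}(\s) = e^{-H_{\Lambda}(\s)-xH_{\Lambda}'(\s)}/Z_{\Lambda}(x)$, I would compute the functional derivative
\[
\frac{\partial \mathcal{G}_{\Lambda,x}(\s)}{\partial H_{\Lambda}'(\t)}\, =\, -x\,\delta_{\s,\t}\,\mathcal{G}_{\Lambda,x}(\s) + x\,\mathcal{G}_{\Lambda,x}(\s)\,\mathcal{G}_{\Lambda,x}(\t)\, ,
\]
using the quotient rule and the explicit $x$-dependence of both numerator and partition function.

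Next, for each $\s$, Stein's lemma gives
\[
\E[H_{\Lambda}'(\s)\mathcal{G}_{\Lambda,x}(\s)]\, =\, \sum_{\t}\E[H_{\Lambda}'(\s)H_{\Lambda}'(\t)]\,\E\!\left[\frac{\partial\mathcal{G}_{\Lambda,x}(\s)}{\partial H_{\Lambda}'(\t)}\right]\, =\, |\Lambda|\sum_{\t} c_{\Lambda}'(\s,\t)\,\E\!\left[\frac{\partial\mathcal{G}_{\Lambda,x}(\s)}{\partial H_{\Lambda}'(\t)}\right]\, .
\]
Substituting the two terms from the functional derivative, dividing by $|\Lambda|$, and summing over $\s$ produces exactly
\[
\langle h_{\Lambda}'\rangle_{\Lambda,x}\, =\, -x\,\langle c_{\Lambda}'(\s,\s)\rangle_{\Lambda,x} + x\,\langle c_{\Lambda}'(\s,\s')\rangle_{\Lambda,x}\, ,
\]
where two independent replicas appear in the second term because the product $\mathcal{G}_{\Lambda,x}(\s)\mathcal{G}_{\Lambda,x}(\t)$ produces the two-replica Gibbs weight. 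The final step is cosmetic: multiplying by $-1$ and symmetrizing $c_{\Lambda}'(\s,\s)$ against $c_{\Lambda}'(\s',\s')$ (legal since under the two-replica measure $\s$ and $\s'$ are exchangeable) rewrites the answer as $\tfrac{x}{2}\langle c_{\Lambda}'(\s,\s)+c_{\Lambda}'(\s',\s')-2c_{\Lambda}'(\s,\s')\rangle_{\Lambda,x}$.

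There is no real obstacle: the whole argument is the standard Gaussian IBP computation familiar from Section 3. The only points to check carefully are the sign and the double-replica origin of the off-diagonal covariance term, and the mild interchange of $\E$ and $d/dx$ at the very first step.
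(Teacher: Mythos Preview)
Your proposal is correct and follows essentially the same approach as the paper: both use Gaussian integration by parts (Wick's rule/Stein's lemma) on the family $(H_{\Lambda}'(\s))_{\s}$, applied to the Gibbs weight, producing the diagonal and two-replica off-diagonal covariance terms. The only cosmetic difference is that you compute the functional derivative $\partial\mathcal{G}_{\Lambda,x}(\s)/\partial H_{\Lambda}'(\t)$ first and then invoke Stein's lemma, whereas the paper writes the Wick expansion of $\E[h_{\Lambda}'(\s)\,e^{-H_{\Lambda}(\s;x)}/Z_{\Lambda}(x)]$ directly; the underlying computation and the final symmetrization step are identical.
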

\begin{proof}
Since $\Lambda$ is finite, the derivative exists and we may write
$$
\frac{d}{dx}\, \E[p_{\Lambda}(x)]\,
=\,  \E[\Omega_{\Lambda,x}( h_{\Lambda}' )]\,
$$
by using (\ref{eq:PressureDerivative}). Recall the definition of $\Omega_{\Lambda,x}$
as well as $\mathcal{G}_{\Lambda,x}$ from equations (\ref{gibbs}),
(\ref{gibbs2}) and (\ref{gibbs3}).
Then Wick's rule gives
\begin{align*}
\E[\Omega_{\Lambda,x}( h_{\Lambda}' )]\,
&=\, \E\left[ \sum_{\s} h_{\Lambda}'(\s) \frac{\exp\left(-H_{\Lambda}(\sigma;x)\right)}{Z_{\Lambda}(x)}\right]\\
&=\, \sum_{\s} \E[h_{\Lambda}'(s) H_{\Lambda}'(\s)] \E\left[\frac{(\partial/\partial H_{\Lambda}'(\s)) \exp\left(-H_{\Lambda}(\s;x)\right)}{Z_{\Lambda}(x)}\right]\\
&\qquad
- \sum_{\s} \sum_{\s'}\E[h_{\Lambda}'(\s) H_{\Lambda}'(\s')]
\E\left[\frac{\exp(-H_{\Lambda}(\s;x))}{Z_{\Lambda}(x)^2} \cdot \frac{\partial}{\partial H_{\Lambda}'(\s')}\, \exp(-H_{\Lambda}(\s';x))\right]\, ,
\end{align*}
Since $H_{\Lambda}(\sigma;x) = H_{\Lambda}(\s) + x H'_{\Lambda}(\s)$,
this means $(\partial/\partial H_{\Lambda}'(\s)) \exp(-H_{\Lambda}(\s;x)) = -x \exp(H_{\Lambda}(\s))$.
Using this and the fact that $\E[h_{\Lambda}'(\s) H_{\Lambda}'(\s')] = |\Lambda|^{-1}
\E[H_{\Lambda}'(\s) H_{\Lambda}'(\s')] = c_{\Lambda}'(\s,\s')$, this gives the result.
\end{proof}

\begin{corollary}
\label{cor:FirstDerivCorollary}
For any $x$
$$
\left|\langle h_{\Lambda}' \rangle_{\Lambda,x}\right|\,
\leq\, 2|x| \overline{c}_{\Lambda}'\, .
$$
\end{corollary}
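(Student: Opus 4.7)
The plan is to read the bound off directly from the second equality of Lemma \ref{lem:FirstDeriv}, which represents $\langle h_{\Lambda}' \rangle_{\Lambda,x}$ as
$$
\langle h_{\Lambda}' \rangle_{\Lambda,x}\, =\, -\frac{x}{2}\,
\left\langle c_{\Lambda}'(\s,\s) + c_{\Lambda}'(\s',\s') - 2 c_{\Lambda}'(\s,\s') \right\rangle_{\Lambda,x}\, .
$$
Once this identity is in hand, the remainder of the argument is purely algebraic: no further Gaussian manipulations or replica tricks are needed.

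The key step is then to apply the triangle inequality under the quenched Gibbs bracket. Since $\langle \cdot \rangle_{\Lambda,x}$ is a (linear and positive) expectation, it satisfies $|\langle F \rangle_{\Lambda,x}| \leq \langle |F| \rangle_{\Lambda,x}$, and hence
$$
|\langle h_{\Lambda}' \rangle_{\Lambda,x}|\, \leq\, \frac{|x|}{2}\,
\left\langle |c_{\Lambda}'(\s,\s)| + |c_{\Lambda}'(\s',\s')| + 2|c_{\Lambda}'(\s,\s')| \right\rangle_{\Lambda,x}\, .
$$
The second step is to replace each of the three absolute values on the right-hand side by the uniform bound $\overline{c}_{\Lambda}' = \max_{\s,\s'} |c_{\Lambda}'(\s,\s')|$, which is valid pointwise in the replicas $\s,\s'$. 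This yields $4\overline{c}_{\Lambda}'$ inside the bracket; since this is deterministic, the quenched average acts trivially, and we obtain exactly $(|x|/2)\cdot 4 \overline{c}_{\Lambda}' = 2|x|\overline{c}_{\Lambda}'$, as claimed.

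There is no real obstacle here: everything reduces to bounding the three-term expression by its pointwise supremum via the triangle inequality, after which the prefactor $|x|/2$ produces the stated constant $2|x|$. The only thing to be careful about is to apply the triangle inequality before comparing against $\overline{c}_{\Lambda}'$, because without absolute values one could not in general replace $c_{\Lambda}'(\s,\s')$ by $\overline{c}_{\Lambda}'$ pointwise (the quantity $c_{\Lambda}'(\s,\s')$ is a covariance and can be signed, while $\overline{c}_{\Lambda}'$ is nonnegative by its definition as a maximum of absolute values).
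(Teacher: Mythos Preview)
Your proposal is correct and matches the paper's own proof: the paper simply says the bound ``follows from (\ref{eq:pressureIBP}) and a uniform bound using the definition $\overline{c}_{\Lambda}' := \max_{\sigma,\sigma'} |c_{\Lambda}'(\s,\s')|$,'' which is exactly the identity-plus-triangle-inequality argument you wrote out in detail.
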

\begin{proof}
This follows from (\ref{eq:pressureIBP}) and a uniform bound using the definition
$\overline{c}_{\Lambda}' \stackrel{\mathrm{def}}{:=} \max_{\sigma,\sigma'} |c_{\Lambda}(\s,\s')|$.
\end{proof}

With this, we can prove the first part of the theorem.

\begin{proofof}{\bf Proof of Theorem \ref{thm:rate}, part (a):}
Another statistical mechanics  calculation following (\ref{eq:PressureDerivative})
is
$$
\frac{d^2}{dx^2}\, p_{\Lambda}(x)\,
=\, |\Lambda| (\Omega_{\Lambda,x}([h_{\Lambda}']^2)
- [\Omega_{\Lambda,x}(h_{\Lambda}')]^2)\, .
$$
So, integrating and taking expectations, we have
$$
\int_{x_1}^{x_2} \E\left[\Omega_{\Lambda,x}\left( \left|h_{\Lambda}'(\sigma) - \Omega_{\Lambda,x}\big(h_{\Lambda}'(\sigma)\Big)\right|^2\right)\right]\, dx\,
=\, \frac{1}{|\Lambda|} \cdot \frac{d}{dx}\, \E[p_{\Lambda}(x)]\bigg|_{x_1}^{x_2}\, .
$$
Recall that $\langle \cdots \rangle_{\Lambda,x} = \E[\Omega_{\Lambda,x}(\cdots)]$.
But by (\ref{eq:pressureIBP}) and Corollary \ref{cor:FirstDerivCorollary} this leads directly to (\ref{eq:DesideratumA}).
\end{proofof}

In order to obtain  the proof of Theorem \ref{thm:rate}, part (b), we will use concentration of measure.
Our main goal will be to obtain a bound on the random fluctuations
of the quantity
$\Omega_{\Lambda,x}\left(h_{\Lambda}'\right) - \langle h_{\Lambda}'\rangle_{\Lambda,x}$.
We start by quoting a result which was proved in \cite{GiardinaStarr}:
\begin{equation}
\label{eq:GS-bound}
\forall t>0\, ,\qquad
\P\left(\left|p_{\Lambda}(x) - \E[p_{\Lambda}(x)]\right|
\geq \frac{|x| \sqrt{\overline{c}_{\Lambda}'}}{\sqrt{|\Lambda|}}\, t \right)\, \leq\, 2 e^{-t^2/2}\, ,\
a.s.,
\end{equation}
where recall that $\overline{c}_{\Lambda}'$ was defined in the statement of Theorem \ref{thm:rate}, part (b).

We now claim that the following result may be proved using this and previous results:
\begin{lemma}
\label{lem:DerivCOM}
For any $x$, and for any $\epsilon>0$, we have
\begin{equation}
\label{eq:FirstVariation}
\E\left[\left|\Omega_{\Lambda,x}\left(h'_{\Lambda}\right) -
\left\langle h'_{\Lambda} \right\rangle_{\Lambda,x}\right|\right]\,
\leq\,
\frac{4(|x|+\epsilon) \sqrt{2\overline{c}_\Lambda'}}{\epsilon \sqrt{\pi |\Lambda|}}
+ \epsilon \int_{-1}^{1} (1 - |z|) \E\left[ \frac{d^2 p_{\Lambda}}{d x^2}(x+\epsilon z)\right]\, dz\, .
\end{equation}
\end{lemma}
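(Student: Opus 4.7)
The plan is to observe first that, by \eqref{eq:PressureDerivative} and Lemma~\ref{lem:FirstDeriv}, the left-hand side of \eqref{eq:FirstVariation} equals $\E\,\bigl|p_\Lambda'(x)-\E p_\Lambda'(x)\bigr|$, and then to control this quantity by trading the $x$-derivative for a small-scale finite difference of $p_\Lambda$ via convexity (paying a Taylor remainder that will be recognised as the tent-kernel integral $\epsilon\int_{-1}^{1}(1-|z|)\,\E p''_\Lambda(x+\epsilon z)\,dz$), and by controlling the three resulting fluctuations of $p_\Lambda$ at $x-\epsilon$, $x$, $x+\epsilon$ with the Giardina--Starr concentration estimate \eqref{eq:GS-bound}.

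Concretely, because $H_\Lambda(\sigma;x)=H_\Lambda(\sigma)+xH_\Lambda'(\sigma)$ depends affinely on $x$, the pressure $p_\Lambda$ is convex in $x$, so for every $\epsilon>0$
$$\frac{p_\Lambda(x)-p_\Lambda(x-\epsilon)}{\epsilon}\;\leq\;p_\Lambda'(x)\;\leq\;\frac{p_\Lambda(x+\epsilon)-p_\Lambda(x)}{\epsilon},$$
and the same sandwich holds for $\E[p_\Lambda]$. Pairing the upper bound on $p_\Lambda'(x)$ with the lower bound on $\E p_\Lambda'(x)$ (and symmetrically the lower with the upper), I obtain after rearrangement
$$\bigl|p_\Lambda'(x)-\E p_\Lambda'(x)\bigr|\;\leq\;\frac{1}{\epsilon}\sum_{y\in\{x-\epsilon,\,x,\,x+\epsilon\}}\bigl|p_\Lambda(y)-\E p_\Lambda(y)\bigr|\;+\;\Delta_\epsilon,$$
where $\Delta_\epsilon:=\epsilon^{-1}\E\bigl[p_\Lambda(x+\epsilon)-2p_\Lambda(x)+p_\Lambda(x-\epsilon)\bigr]\geq 0$ is a purely deterministic centered second-difference of $\E[p_\Lambda]$. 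Two integrations by parts (Taylor's theorem with integral remainder, applied to the $C^2$ convex function $\E p_\Lambda$) identify
$$\Delta_\epsilon\;=\;\frac{1}{\epsilon}\int_{-\epsilon}^{\epsilon}(\epsilon-|s|)\,\E p''_\Lambda(x+s)\,ds\;=\;\epsilon\int_{-1}^{1}(1-|z|)\,\E p''_\Lambda(x+\epsilon z)\,dz,$$
which is exactly the second summand on the right-hand side of \eqref{eq:FirstVariation}.

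To finish, I take expectations of the previous bound and convert the subgaussian tail \eqref{eq:GS-bound} into an $L^1$ estimate $\E|p_\Lambda(y)-\E p_\Lambda(y)|\leq |y|\sqrt{2\overline{c}_\Lambda'/(\pi|\Lambda|)}$ by way of $\E|Y|=\int_0^{\infty}\Prob(|Y|\geq u)\,du$; combined with the elementary inequality $|x-\epsilon|+|x|+|x+\epsilon|\leq 4(|x|+\epsilon)$ this yields the first summand of \eqref{eq:FirstVariation}. The only delicate point---the ``main obstacle'', if one wishes to call it that---is the bookkeeping in the convexity step: the two one-sided bounds must be set up so that $\Delta_\epsilon$, which is non-negative by convexity and appears with opposite signs in the upper and lower bounds on $p_\Lambda'(x)-\E p_\Lambda'(x)$, is not double-counted upon passing to the absolute value, and the tail integration of \eqref{eq:GS-bound} must be carried out sharply enough to extract the Gaussian-type prefactor $\sqrt{2/\pi}$ rather than the cruder subgaussian $\sqrt{2\pi}$. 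Beyond this there is no deeper obstacle: everything reduces to convexity of $p_\Lambda$, Taylor's theorem, and the already-quoted concentration bound.
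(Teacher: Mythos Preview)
Your argument is correct and takes a genuinely different route from the paper's. The paper writes the \emph{exact} Taylor identity
\[
p_\Lambda'(x)=\frac{p_\Lambda(x+\epsilon)-p_\Lambda(x-\epsilon)}{2\epsilon}+\frac{\epsilon}{2}\int_{-1}^{1}g(z)\,p_\Lambda''(x+\epsilon z)\,dz,
\]
subtracts expectations, applies \eqref{eq:GS-bound} at the \emph{two} points $x\pm\epsilon$, and then must still control the \emph{random} remainder $Y_\epsilon-\E Y_\epsilon$, which it does via the crude bound $\E|Z-\E Z|\le 2\E Z$ for the nonnegative variable $Z=p_\Lambda''$. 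You instead use the one-sided convexity sandwich (the same device as in Proposition~\ref{talestimate}), paying the price of evaluating the pressure at \emph{three} points $x,x\pm\epsilon$, but with the gain that your remainder $\Delta_\epsilon$ is already deterministic and equals the tent-kernel integral directly by the second-order Taylor formula; no separate fluctuation estimate on the $p''_\Lambda$ term is needed. Both routes land on the same structure; yours is arguably the more elementary.

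One small correction: the claim that the layer-cake integration of \eqref{eq:GS-bound} can be ``carried out sharply enough to extract the Gaussian-type prefactor $\sqrt{2/\pi}$'' is not right. From $\P(|Y|\ge at)\le 2e^{-t^2/2}$ the best one gets is $\E|Y|\le a\int_0^\infty\min(1,2e^{-t^2/2})\,dt$, and this integral exceeds $\sqrt{2\ln 2}>1>\sqrt{2/\pi}$. This is not a defect of your approach relative to the paper's: the paper's own bound \eqref{ineq:Erf} yields $8a\int_0^\infty e^{-t^2/2}\,dt=4\sqrt{2\pi}\,a$ rather than the $4\sqrt{2/\pi}\,a$ stated in \eqref{eq:FirstVariation}, so the numerical constant is off by a factor of $\pi$ in both proofs. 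This is immaterial for the application in Theorem~\ref{thm:rate}.
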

\begin{proof}
By (\ref{eq:PressureDerivative}) and Taylor's theorem
\begin{align*}
-\Omega_{\Lambda,x}(h'_{\Lambda})\,
&=\,
\frac{p_{\Lambda}(x+\epsilon) - p_{\Lambda}(x-\epsilon)}{2\epsilon}
+ \frac{1}{2}\, \int_{-1}^{1} \left(\frac{dp_{\Lambda}}{dx}(x) - \frac{dp_{\Lambda}}{dx}(x+\epsilon y)\right)\, dy\\
&=\,
\frac{p_{\Lambda}(x+\epsilon) - p_{\Lambda}(x-\epsilon)}{2\epsilon}\\
&\qquad
+ \epsilon\, \int_{-1}^{1} \left(\int_{-1}^{1}
\frac{\boldsymbol{1}_{\R_-}(y) \boldsymbol{1}_{(y,0)}(z)
- \boldsymbol{1}_{\R_+}(y) \boldsymbol{1}_{(0,y)}(z)}{2} \cdot
\frac{d^2p_{\Lambda}}{dx^2}(x+\epsilon z)\, dz\right)\, dy\\
&=\,
\frac{p_{\Lambda}(x+\epsilon) - p_{\Lambda}(x-\epsilon)}{2\epsilon}
+ \frac{\epsilon}{2}\, \int_{-1}^{1} g(z) \frac{d^2 p_{\Lambda}}{d x^2}(x+\epsilon z)\, dz\, ,
\end{align*}
where we intechanged the order of integration and
$$
g(z)\, =\, \begin{cases} z-1 & \text{ for $0<z<1$,}\\
1+z & \text{ for $-1<z<0$.}
\end{cases}
$$
For each $z \in [-1,1]$, let us define the random variable
$$
Z_{\epsilon}(z)\, =\, \frac{d^2 p_{\Lambda}}{d x^2}(x+\epsilon z)\, ,
$$
which is nonnegative.
Let us also define
$$
Y_{\epsilon}\, =\, \frac{\epsilon}{2}\, \int_{-1}^{1} g(z) Z_{\epsilon}(z)\, dz\, .
$$
Then we obtain
$$
\Omega_{\Lambda,x}(h'_{\Lambda})
- \E[\Omega_{\Lambda,x}(h'_{\Lambda})]\,
=\,
\frac{(p_{\Lambda}(x+\epsilon) - \E[p_{\Lambda}(x+\epsilon)]) - (p_{\Lambda}(x-\epsilon)-\E[p_{\Lambda}(x-\epsilon)]}{2\epsilon}
+ Y_{\epsilon} - \E[Y_{\epsilon}]\, .
$$
We may use equation
(\ref{eq:GS-bound}) and the general subset bound,
$\P(A\cup B) \leq \P(A) + \P(B)$, to obtain this:
$$
\Omega_{\Lambda,x}\left(h'_{\Lambda}\right)
- \E[\Omega_{\Lambda,x}(h'_{\Lambda})]\,
\leq\, \frac{(|x|+\epsilon) \sqrt{\overline{c}_\Lambda'}}{\epsilon \sqrt{|\Lambda|}}\, t
+ Y_{\epsilon} - \E[Y_{\epsilon}] \quad \text{ with probability $p \geq 1 - 4 e^{-t^2/2}$.}
$$
We have a similar statement for the lower bound. Therefore, again by the subset bound,
\begin{equation}
\label{eq:DerivCOMpenultimate}
\P\left(\left|\Omega_{\Lambda,x}\left(h'_{\Lambda}\right) -
\left\langle h'_{\Lambda} \right\rangle_{\Lambda,x}\right|
\geq\, \frac{(|x|+\epsilon) \sqrt{\overline{c}_\Lambda'}}{\epsilon \sqrt{|\Lambda|}}\, t
+ |Y_{\epsilon} - \E[Y_{\epsilon}]| \right)\, \leq\, 8 e^{-t^2/2}\, .
\end{equation}
Recall that $\langle h'_{\Lambda} \rangle_{\Lambda,x} =  \E[\Omega_{\Lambda,x}(h'_{\Lambda})]$
by definition.

So, using the fact that for any integrable random variable $X$, $\E[X] \leq \int_0^{\infty} \P(X\geq t)\, dt$,
we obtain
\begin{equation}
\label{ineq:Erf}
\E\left[\left|\Omega_{\Lambda,x}\left(h'_{\Lambda}\right) -
\left\langle h'_{\Lambda} \right\rangle_{\Lambda,x}\right|\right]\,
\leq\, \E\left[|Y_{\epsilon} - \E[Y_{\epsilon}]|\right] +  \frac{8(|x|+\epsilon) \sqrt{\overline{c}_\Lambda'}}{\epsilon \sqrt{|\Lambda|}} \int_0^{\infty} e^{-t^2/2}\, dt\, .
\end{equation}

But now note that
$$
Y_{\epsilon} - \E[Y_{\epsilon}]\,
=\, \frac{\epsilon}{2}\, \int_{-1}^{1} g(z) \left(Z_{\epsilon}(z)- \E\left[Z_{\epsilon}(z)\right]\right)\, dz\, ,
$$
which implies that
$$
\E\left[\left|Y_{\epsilon} - \E[Y_{\epsilon}]\right|\right]\,
\leq\,
 \frac{\epsilon}{2}\, \int_{-1}^{1} |g(z)|\E\left[\left|Z_{\epsilon}(z)- \E\left[Z_{\epsilon}(z)\right]\right|\right]\, dz\, .
$$
But since $Z_{\epsilon}(z)$ is nonnegative, almost surely, this means that
$$
\E\left[\left|Z_{\epsilon}(z)- \E\left[Z_{\epsilon}(z)\right]\right|\right]\,
\leq\, \E\left[\left|Z_{\epsilon}(z)\right|\right]+ \left|\E\left[Z_{\epsilon}(z)\right]\right|\,
=\, 2 \E\left[Z_{\epsilon}(z)\right]\, .
$$
Combining this bound with (\ref{ineq:Erf}) gives the desired inequality (\ref{eq:FirstVariation}).
\end{proof}

With this, we can prove the remainder of the theorem.

\begin{proofof}{\bf Proof of Theorem \ref{thm:rate}, part (b):}
For any $x_1<x_2$, integrating (\ref{eq:FirstVariation}) and dividing by the length of the interval we obtain
\begin{equation}
\label{eq:Penultimate}
\begin{split}
\frac{1}{x_2-x_1}\, \int_{x_1}^{x_2} \E\left[\left|\Omega_{\Lambda,x}\left(h'_{\Lambda}\right) -
\left\langle h'_{\Lambda} \right\rangle_{\Lambda,x}\right|\right]\, dx\,
&\leq\, \frac{2(|x_1|+|x_2|+2\epsilon) \sqrt{2\overline{c}_\Lambda'}}{\epsilon \sqrt{\pi |\Lambda|}}\\
&\hspace{-3cm}
+ \frac{\epsilon}{x_2-x_1}\, \int_{-1}^{1} (1 - |z|)
\left(\int_{x_1}^{x_2} \E\left[\frac{d^2 p_{\Lambda}}{d x^2}(x+\epsilon z)\right]\, dx\right)\, dz\, .
\end{split}
\end{equation}
But, by the fundamental theorem of calculus, (\ref{eq:pressureIBP})  and Corollary \ref{cor:FirstDerivCorollary}
$$
\int_{x_1}^{x_2} \E\left[\frac{d^2 p_{\Lambda}}{d x^2}(x+\epsilon z)\right]\, dx\,
=\,
\E\left[\frac{dp_{\Lambda}}{d x}(x+\epsilon z)\right] \Bigg|_{x_1}^{x_2}\, ,
$$
so that
$$
\frac{\epsilon}{x_2-x_1}\, \int_{-1}^{1} (1 - |z|)
\left(\int_{x_1}^{x_2} \E\left[\frac{d^2 p_{\Lambda}}{d x^2}(x+\epsilon z)\right]\, dx\right)\, dz\,
\leq\,
\frac{2  \overline{c}_{\Lambda}' \epsilon}{x_2-x_1}\, \int_{-1}^{1} (1-|z|) (|x_1|+|x_2|+2\epsilon)\, dz\, .
$$
Using this with (\ref{eq:Penultimate}) gives the result
\begin{equation}
\frac{1}{x_2-x_1}\, \int_{x_1}^{x_2} \E\left[\left|\Omega_{\Lambda,x}\left(h'_{\Lambda}\right) -
\left\langle h'_{\Lambda} \right\rangle_{\Lambda,x}\right|\right]\, dx\,
\leq\,
2(|x_1|+|x_2|+2\epsilon)  \left(
\frac{\sqrt{2\overline{c}_\Lambda'}}{\epsilon \sqrt{\pi |\Lambda|}}+
\frac{\overline{c}_{\Lambda}' \epsilon}{x_2-x_1}\right)\, .
\end{equation}
Now we may optimize in $\epsilon > 0$. We choose $\epsilon = \sqrt{x_2-x_1} / (\overline{c}_{\Lambda}' |\Lambda|)^{1/4}$ to get
the desired result, equation (\ref{eq:DesideratumB}).
\end{proofof}

\section{Conclusions and perspectives}

A major new development in the subject of mean-field spin glasses is Panchenko's proof that strong stability
implies ultrametricity. More precisely, Panchenko in \cite{Panchenkoultra} proved that if a random probability measure $G$ in a Hilbert-space-valued inner-product satisfies
the Ghirlanda-Guerra identities in the \textsl{distributional sense}, then this implies ultrametricity. Let show how to combine the results of the  previous sections with the Panchenko's theorem.

A random measure $G$
satisfies the Ghirlanda-Guerra identities in the \textsl{distributional sense}  if for any $n\geq2$, any bounded measurable function
$f$ of the overlaps $(c_{l,l'})_{l,l' \leq n}$ and any bounded measurable function $\psi$  of one overlap, the following hold:
\be\label{eGGI}
\mathbb{E}\Omega\Big( f\psi(c_{1,2})\Big)=\frac{1}{n}\mathbb{E}\Omega\Big( f\Big)\mathbb{E}\Omega\Big(\psi(c_{1,2})\Big)+\frac{1}{n}\sum_{l=2}^n\mathbb{E}\Omega\Big( f\psi(c_{1,l})\Big)
\ee
where $\Omega$ is the average
with respect to $G^{\otimes\infty}$.
\newline
A random measure $G$ is said to be \textit{ultrametric} if the distribution of $(c_{l,l'})_{l,l' \geq 1}$ satisfies

\be\label{Ultra}
\mathbb{E}\Omega\Big( I\Big(c_{1,2}\geq\min(c_{1,3},c_{2,3})\Big)\Big)=1
\ee
By Theorem \ref{thm:main}, we can deduce that the asymptotic Gibbs measure of the model defined in (\ref{h_MP}) satisfies  the condition (\ref{eGGI}) for every power of the covariance and, by the Stone-Weierstrass theorem, this suffices to ensure that the Ghirlanda-Guerra identities hold in the \textsl{distributional sense}. More precisely Theorem \ref{thm:main} has two consequences. The proposition $(iv)$ implies that
\be\label{eGGI2}
\lim_{\Lambda \nearrow \Z^d}\mathbb{E}\Omega_{\Lambda,\beta}\Big( f\psi(c_{1,2})\Big)=\frac{1}{n}\mathbb{E}\Omega_{\Lambda,\beta}\Big( f\Big)\mathbb{E}\Omega_{\Lambda,\beta}\Big(\psi(c_{1,2})\Big)+\frac{1}{n}\sum_{l=2}^n\mathbb{E}\Omega_{\Lambda,\beta}\Big( f\psi(c_{1,l})\Big)
\ee
The proposition (ii) implies that this equation is true {\em for almost every} choice of parameter
$\beta = (\beta_p)_{p=1}^{\infty}$. In both case, we can identify the asymptotic Gibbs measure with a random  probability measure in a Hilbert-space using the representation theorem of Dovbysh and Sudakov (see \cite{ArguinChatterje} or \cite{Panchenkoover}) and the measure is  \textsl{ultrametric} by the Panchenko's result.\\

But it had been earlier known that the Ghirlanda-Guerra identities, when combined
with ultrametricity, imply that the precise distribution of the overlaps is given by the general class of Derrida-Ruelle
Poisson-Dirichlet random probability cascades. This result is know as  Baffioni-Rosati theorem (see
\cite{BaffioniRosati} and  \cite{Talagrand2} for a more complete exposition). One of its consequence
is that for an uncorrelated model (defined by having all the overlaps either $q_0<1$ or $1$)
the Ghirlanda-Guerra identities imply that it behaves as a single level Derrida-Ruelle cascade
known as the Poisson-Dirichlet random partition function. See for example \cite{Talagrand}.
Therefore, to summarize, Panchenko and Baffioni-Rosati results and Theorem \ref{thm:main}, implies that
a general Hamiltonian (\ref{h_alaruelle}) has an ultrametric Derrida Ruelle quenched equilibrium measure.\\

We have proved that all triangles of $c$-overlaps in a wide class of $d$-dimensional spin glass models must be isosceles or equilateral, as demanded by ultrametricity. This doesn't exclude the possibility of having only equilateral triangles, which is the trivial case for the overlap. We hope to return to study this question in the future. At present the issue on triviality is only studied numerically (see \cite{aum,um} for a recent debate).\\

It has been observed that for models such as the Edwards-Anderson model, the genuine overlap $q$ and the link overlap
$Q_{\text{link}}=c_{\Lambda}$, are equivalent in the sense that the conditional quenched distribution of one given the other is actually non-random \cite{oveq}. This has been numerically verified up to the computational limits in the citation and
contradicts the interpretation of previous work \cite{py,km}. The results presented in this paper suggest the possibility
to approach the replica equivalence conjecture by observing that not only are both $q$ and $Q_{\text{l}}$ ultrametric, but actually so is any choice of a $c$ matrix in the closed positive-cone algebra. This led to the idea of considering
$(1-t) q + t  Q_{\text{l}}$ and its properties for every $t \in [0,1]$. \\

We finally mention that the ideas outlined within this work can be viewed as classical
probability problem of the type called infinite divisibility. One may consider a general type of transformation: let
$$
\mathcal{T}_{\Lambda}^{x} \langle \cdots \rangle_{\Lambda}\,
=\, \langle \cdot \rangle_{\Lambda,x}\, .
$$
Then stochastic stability refers to stability under $\mathcal{T}_{\Lambda}^{x/|\Lambda|}$
in the limit $|\Lambda|\to\infty$: which we may refer to as $\mathcal{T}^x$
even though one must interpret this in terms of limits.
This is invariance of the expectation, since $\langle \cdots \rangle_{\Lambda}$
is defined to be $\E[\Omega_{\Lambda}(\cdot)]$.
But if one allows for a more general transformation,
which shifts by higher-order interactions, then one can also conclude
stochastic stability in distribution, not merely in expectation.

A fundamental concept in probability is
invariance under random transformations.
Stochastic stability should be viewed in that context.
Infinite divisibility refers to the property of being in the range of a transformation such as $\mathcal{T}^x$
for all $x$.
The quenched state for spin glasses satisfies invariance, which is stochastic stability.
But a more general question is to search for those distributions which satisfy infinite divisibility.
This could be a line of further investigation, elsewhere.

In our paper we have tried to demonstrate the following point, which we feel bears repeating.
Panchenko has showed that stochastic stability implies the ultrametric replica symmetry
breaking scenario \cite{Panchenkoultra} (which does also include the possibility of trivial ultrametricity,
where for any three replicas, the triangle formed by the intrinsic overlaps may be equilateral always).

Panchenko's proof uses a strong version of de Finetti's theorem.
But this applies because of permutation
invariance of the replicas, not the underlying model.
Indeed as we have shown, stochastic stability and its implications even apply to the Edwards-Anderson
model if it is perturbed by an arbitrarily small inclusion of higher order interactions.
Stochastic stability is a general tool, which in principle may be applicable to any disordered model in statistical mechanics including short-ranged spin glasses. We mention that using the language of metastates \cite{AW,NS} 
stochastic stability has been approached in \cite{ArguinDamron} where identities where proved
everywhere in the parameters using periodic boundary conditions and averaging over translations. It remains 
open to show that their analogue of the link overlap coincides with the usual one which is what we used here.
See also \cite{Barra} for an alternative derivation of the stochastic stability identities.

It is finally worth to stress that the addition of the higher orders interactions introduced in $(\ref{h_MP})$
is done in order to obtain the identities in distribution and not only in mean. The nature of those higher
orders terms appear to be formally similar to the p-spin interactions of the mean field case models.
In particular their interaction structure doesn't decay with distance but it's a sum normalised with the volume.
At the same time we notice that the core term is still a two point interaction made of nearest neighbouring
sites and as such it retains the topological information of the lattice dimension. While, as we properly explained,
we make no claim on the non-triviality issue of the models, we observe that the deformation method allows us
to reach not only the identities in distribution for the deformed model, but also to extend them to the original
model by sending the deformation parameter $\epsilon\to 0$ as specified by Corollary $(\ref{co43})$.\\

{\bf Acknowledgments} The authors thanks an anonymous referee for a question who led to the final remark
of the conclusion. SS was supported by an  NSA Young Investigator's grant.

\end{document}